\documentclass[journal,comsoc]{IEEEtran}
\usepackage[utf8]{inputenc}
\usepackage[margin=1in]{geometry}
\usepackage{graphicx}
\usepackage{placeins}
\usepackage{float}
\usepackage{bbm}
\usepackage{subfigure}
\usepackage{verbatim}
\usepackage{amsmath,amssymb,amsfonts, amsthm}
\usepackage{mathtools}
\usepackage[mathscr]{euscript}
\usepackage{mwe}
\usepackage[ruled,vlined]{algorithm2e}
\usepackage{setspace}
\usepackage{cite}
\usepackage{xcolor}

\DeclareMathOperator{\E}{\mathbb{E}}

\DeclareMathOperator*{\argmax}{arg\,max}
\DeclareMathOperator*{\argmin}{arg\,min}

\newtheorem{prop}{Proposition}
\newtheorem{definition}{Definition}
\newtheorem{remark}{Remark}


\def\nbf{{\mathbf{f}}}

\def\nbh{{\mathbf{h}}}

\def\nbv{{\mathbf{v}}}

\def\nbx{{\mathbf{x}}}

\def\nbz{{\mathbf{z}}}
\def\nb0{{\mathbf{0}}}
\def\nb1{{\mathbf{1}}}


\def\nbX{{\mathbf{X}}}

\def\nbZ{{\mathbf{Z}}}



\def\nbbN{{\mathbb{N}}}

\def\nbbR{{\mathbb{R}}}









\def\argmin{\operatorname{arg~min}}
\def\argmax{\operatorname{arg~max}}
\def\E{\mathbb{E}}

\def\P{\mathbb{P}}













\usepackage[justification=centering]{caption}
\usepackage[font=small]{caption}
\newcommand{\subparagraph}{}
\usepackage[explicit]{titlesec}

\let\OLDthebibliography\thebibliography
\renewcommand\thebibliography[1]{
	\OLDthebibliography{#1}
	\setlength{\parskip}{0pt}
	\setlength{\itemsep}{0pt minus 0.0ex}
}

\pagenumbering{gobble}

\makeatletter
\def\thickhline{%
	\noalign{\ifnum0=`}\fi\hrule \@height \thickarrayrulewidth \futurelet
	\reserved@a\@xthickhline}
\def\@xthickhline{\ifx\reserved@a\thickhline
	\vskip\doublerulesep
	\vskip-\thickarrayrulewidth
	\fi
	\ifnum0=`{\fi}}
\makeatother

\titlespacing{\section}{4pt}{1ex}{0.25ex}
\titlespacing{\subsection}{0pt}{0.25ex}{0.25ex}
\titlespacing{\subsubsection}{0pt}{0.25ex}{0.25ex}

\newlength{\thickarrayrulewidth}
\setlength{\thickarrayrulewidth}{5\arrayrulewidth}

\title{Constrained Contextual Bandit Learning for Adaptive Radar Waveform Selection}
\markboth{}{Constrained Contextual Bandit Learning for Adaptive Radar Waveform Selection}

\begin{document}
\author{Charles E. Thornton, R. Michael Buehrer, and Anthony F. Martone
	
\thanks{\footnotesize C.E. Thornton and R.M. Buehrer are with Wireless @ Virginia Tech, Department of ECE, Blacksburg, VA, USA, 24061. \textit{(Emails: $\{$thorntonc, buehrer$\}$@vt.edu).}\\ A.F. Martone is with the US Army Research Laboratory, Adelphi MD, USA, 20783. \textit{(Email: anthony.f.martone.civ@mail.mil).}\\ The support of the US Army Research Office (ARO) is gratefully acknowledged. Portions of this work were presented at IEEE GLOBECOM, Taipei, Taiwan, Dec. 2020 \cite{prelim}. Other portions will be presented at IEEE Radar Conf., Atlanta, GA, May 2021 \cite{prelim2}.}}
\maketitle	

\IEEEaftertitletext{\vspace{.002\baselineskip}}
\begin{abstract}
A sequential decision process in which an adaptive radar system repeatedly interacts with a finite-state target channel is studied. The radar is capable of passively sensing the spectrum at regular intervals, which provides side information for the waveform selection process. The radar transmitter uses the sequence of spectrum observations as well as feedback from a collocated receiver to select waveforms which accurately estimate target parameters. It is shown that the waveform selection problem can be effectively addressed using a linear contextual bandit formulation in a manner that is both computationally feasible and sample efficient. Stochastic and adversarial linear contextual bandit models are introduced, allowing the radar to achieve effective performance in broad classes of physical environments. Simulations in a radar-communication coexistence scenario, as well as in an adversarial radar-jammer scenario, demonstrate that the proposed formulation provides a substantial improvement in target detection performance when Thompson Sampling and EXP3 algorithms are used to drive the waveform selection process. Further, it is shown that the harmful impacts of pulse-agile behavior on coherently processed radar data can be mitigated by adopting a time-varying constraint on the radar's waveform catalog.
\end{abstract}
\begin{IEEEkeywords}
online learning, contextual bandit, cognitive radar, radar-cellular coexistence, stochastic optimization
\end{IEEEkeywords}
\section{Introduction}
Since the inception of radar systems, the fundamental question of ``which waveforms to transmit and when?" has remained largely unresolved, primarily due to the open-ended nature of the problem. In the 1950's, Woodward introduced the \emph{ambiguity function} to highlight the limitation of a particular waveform in jointly measuring delay and Doppler information, thereby demonstrating the need for application specific trade-offs in the design of waveform-receiver pairs \cite{Woodward1953}. Limitations in hardware historically limited radar systems to a fixed transmit waveform, which restricted particular deployments to a relatively static purpose. More recently, advances in hardware and algorithms have allowed for modern radars to adaptively select from a set of diverse waveforms to improve target detection and estimation characteristics. Moreover, since wireless systems are becoming increasingly interference limited \cite{intfLarge}, adaptive waveform selection is of growing importance for radar systems operating in congested spectrum.

\begin{figure}[t]
	\centering
	\includegraphics[scale=0.55]{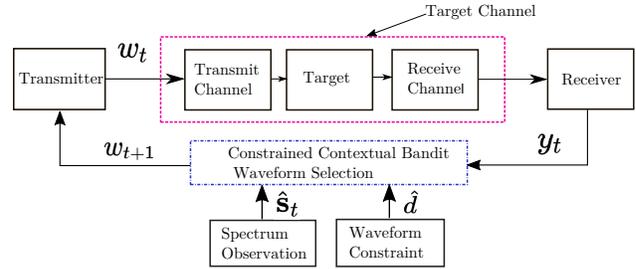}
	\caption{\textsc{System Model.} Role of the proposed waveform selection algorithm in a general closed-loop radar system.}
	\label{fig:block}
	\vspace{-.4cm}
\end{figure} 

Radar waveform selection has traditionally been formulated as a deterministic control or optimization problem \cite{Kershaw1994}. Techniques of this nature require accurate estimates of the noise, clutter, and interference processes to determine optimal sensor-processor parameters for the tracking, detection, or estimation problem of interest. In dynamic interference-limited scenarios, it is often impractical to obtain accurate estimates of the physical environment in real-time. Thus, future radar systems must have the capability to adapt waveform parameters with minimal assumptions regarding the target channel and must make use of limited observations when selecting waveforms. 

\begin{table*}[]
	\small
	\caption{\textsc{Comparison of Waveform Selection Works}}
	\label{tab:lit}
	\begin{tabular}{l|l|l|l|l|l|}
		\cline{2-6}
		& Kershaw/Evans {[}5{]} & Niu et al. {[}14{]} & Sowelam {[}16{]} & \begin{tabular}[c]{@{}l@{}}RL Approaches\\ {[}6{]}/ {[}38{]}/{[}39{]}\end{tabular} & Our Work \\ \cline{2-6} 
		\begin{tabular}[c]{@{}l@{}}Selection \\ Mechanism\end{tabular} & \begin{tabular}[c]{@{}l@{}}Waveform-\\ specific CRLB\end{tabular} & \begin{tabular}[c]{@{}l@{}}Target \\ Kinematic\\ Model\end{tabular} & \begin{tabular}[c]{@{}l@{}}KL\\ information\\ number (for \\ classification)\end{tabular} & \begin{tabular}[c]{@{}l@{}}Available \\ spectrum \\ utilization\end{tabular} & \begin{tabular}[c]{@{}l@{}}Spectrum \\ utilization/\\ waveform \\ distortion\\  constraint\end{tabular} \\ \cline{2-6} 
		\begin{tabular}[c]{@{}l@{}}Target Channel \\ Assumptions\end{tabular} & \begin{tabular}[c]{@{}l@{}}Target always\\ detected,\\ Gaussian \\ measurement errors\end{tabular} & \begin{tabular}[c]{@{}l@{}}Gaussian \\ noise model\end{tabular} & \begin{tabular}[c]{@{}l@{}}Target \\ reflectivity \\ is Gaussian\end{tabular} & None & None \\ \cline{2-6} 
		\begin{tabular}[c]{@{}l@{}}Computational \\ Complexity\end{tabular} & Low (when tractable) & Low & Low & High & Low \\ \cline{2-6} 
		Context & \begin{tabular}[c]{@{}l@{}}Assumes SNR \\ is known\end{tabular} & \begin{tabular}[c]{@{}l@{}}Assumes \\ Kinematic \\ Model\end{tabular} & \begin{tabular}[c]{@{}l@{}}History of\\ received\\ signals\end{tabular} & \begin{tabular}[c]{@{}l@{}}Interference \\ state\end{tabular} & \begin{tabular}[c]{@{}l@{}}Interference \\ estimate/\\ history of \\ received signals\end{tabular} \\ \cline{2-6} 
		Online? & Yes & Yes & Yes & No & Yes \\ \cline{2-6} 
	\end{tabular}
\end{table*}

This work addresses the need for sample efficient and computationally feasible learning schemes for dynamic waveform selection in a closed-loop radar system. Particularly, linear \emph{contextual bandit} algorithms are introduced to guide an adaptive radar's waveform selection process. A visualization of the proposed learning approach in the context of a general radar system is seen in Figure \ref{fig:block}. The significance of this formulation lies in its \textit{context-aware} nature, which allows for the statistical patterns of the surrounding environment to be directly associated with specific waveforms. Further, this model does not assume that the target channel's state sequence is impacted by the radar's choice of waveforms, which is implicitly assumed in related Markov decision process (MDP) formulations \cite{Thornton2020}. Another issue addressed by this work is that of harmful sidelobe levels due to pulse-agility within the radar's coherent processing interval (CPI) \cite{Levanon2004}. We introduce a distance metric between adjacent transmitted waveforms to limit distortion effects in the resulting range-Doppler images. Finally, the proposed learning framework's linear structure provides a scalable model which can be applied to large problems with robust guarantees.

\subsection{Background and Related Work}
\label{ss:bg}
The role of active radio sensing in society is rapidly growing. In the commercial sector, emerging applications include automotive control, human gesture recognition and indoor localization, along with many others \cite{advances}. In military systems, radars are ubiquitously used to provide precise information about both physical targets and the electromagnetic characteristics of devices for strategic purposes. To meet growing performance requirements, modern radar systems must adapt to specific target scenarios and reliably mitigate interference. In particular, radar interference mitigation is expected to become a significant challenge due to the proliferation of communication devices in frequency ranges historically allocated for radar use \cite{technical,Zheng2019}.

Emerging joint communication and sensing applications also require adaptive radar strategies to enable interplay between RF systems that may operate in close proximity. While radar, communications, and electronic warfare systems have traditionally been treated separately, there has been a recent surge of interest in developing multi-function RF systems \cite{Liu2020}. Due to the diverse range of scenarios in which future radar systems will be deployed, \emph{context-aware} intelligent strategies are of particular interest to ensure accurate and reliable sensing performance.

The problem of optimal radar waveform selection has been the subject of much research interest, often approached from either a control-theoretic or information theoretic perspective. A brief comparison of the current contribution with some of these works is presented in Table \ref{tab:lit}. Kershaw and Evans explored a control-theoretic optimization approach to adaptive waveform selection by considering the effect of sensor parameters on tracking performance in a closed-loop system \cite{Kershaw1994}. Alternatively, Bell took an information theoretic view, and proposed radar waveform design based on an information theoretic view of the transmission-reception process. Bell's approach maximizes the SNR at the receiver's matched filter to design waveforms which optimize detection, and maximizes the mutual information between the target and received waveform to design optimal estimation waveforms \cite{Bell1993}. These seminal works are indicative of a large body of work on optimal waveform selection, with many works utilizing the tracking error as an optimality criterion \cite{Hurtado2008,Niu2002,Hong2005}, and many others using information theoretic quantities to guide the waveform selection \cite{Sowelam2000,Tang2019,Zhu2017,Tang2015}. A notable recent development in the information-theoretic perspective is the work of Devroye, which formulates the waveform selection process as a non-classical joint source-channel coding problem with ties to uncoded communication systems \cite{Devroye}. 

In recent years, a body of literature on closed-loop radar systems, often referred to as \textit{cognitive radar}, has emerged. These systems exploit real-time feedback between the transmitter and receiver to adjust transmission and processing parameters, following the view of \cite{Kershaw1994}. A survey of recent advances in this domain can be found in \cite{ontheroad} and the references therein. Many recent contributions have focused on Bayesian particle filtering or optimal beamforming strategies to improve the accuracy of target tracking systems \cite{framework,cogbeamforming}, along with other important applications including adaptive detection \cite{Metcalf2015}, antenna selection \cite{Elbir2019}, and resource management \cite{Mahdi2018}. Cognitive radar techniques have also been applied to facilitate coexistence and interoperability, where spectrum information is utilized in the decision-making process \cite{sharing,Thornton2020,metacog,xampling}. These adaptive strategies can be both practical and effective given that a radar's transmitter and receiver are often co-located, allowing for quick feedback. This structure also allows for statistics about the interference channel to be obtained in real-time via fast spectrum sensing techniques \cite{Martone2018,xampling}.

This work considers interference mitigation between a radar and an unknown shared channel, which relates to the literature on radar and communications coexistence. While a comprehensive survey is beyond the current scope, a summary of recent advances can be found in \cite{Zheng2019} and references therein. Among recent works, many contributions aim to mitigate mutual interference by adapting the communications system transmitter or through joint system planning. Examples include closed-form precoder design \cite{mimocell}, joint optimization of transmit waveforms \cite{convex,codesign2016}, intrapulse radar-embedded communications \cite{Ciuonzo2015}, and estimation of interference channel state information \cite{semiblind,howmuch}. However, co-design is often expensive or impractical for logistical reasons. Furthermore, radar-to-radar interference is an important design consideration in emerging short-range applications \cite{mosarim}, emphasizing the need for non-cooperative spectrum management by adapting the radar's transmitter. 

To adaptively improve the performance of radar systems in dynamic conditions, reinforcement learning (RL) approaches have been proposed \cite{selvi,Thornton2020,PLiu2020}. These RL approaches aim to optimize transmission parameters with respect to a specifically designed objective function using a history of feedback information. This has been achieved by modeling the waveform selection problem as a Markov Decision Process (MDP) \cite{selvi}. However, the application of RL to high-dimensional problems encountered in the real-world often requires an extensive period of offline exploration to learn an effective transmission policy. An extended period of suboptimal behavior may not be practical for time-sensitive applications such as target tracking, spectrum sharing, or electronic warfare. Additionally, the complexity of RL approaches, such as dynamic programming or \textit{Q}-learning, can quickly become intractable for realistic applications as the size of the state-action space that defines the problem increases \cite{sutton}. Further, while RL techniques generally perform well in cases where the environment obeys the Markov property \cite{selvi}, interference in wireless networks is often a time-varying stochastic process, where the Markov property is not guaranteed to hold. Another practical issue associated with an MDP formulation is that the decision-maker's actions are assumed to affect the future states of the scene. In general, this assumption may not hold for environments where other systems and the target may have properties that are independent of the radar's transmissions. 

Thus, there are several significant challenges in directly applying the RL framework to cognitive radars. To mitigate the computational issues and modeling assumptions encountered when solving for an optimal policy in the full RL problem, which considers state transitions as a stochastic process, multi-armed bandit (MAB) approaches are often used for sequential decision making under uncertainty due to their simplicity and theoretical performance guarantees \cite{slivkins}. MAB approaches have shown great promise in developing a variety of spectrum access strategies for cognitive radio link adaptation and channel selection \cite{dsa1,dsa2}. A generalization of the MAB problem which considers side information each decision round is known as a \textit{contextual bandit} (CB) and has been the subject of intense investigation \cite{context,slivkins}. In this work, we utilize linear variants of the CB formulation to develop a model of the radar's waveform selection process. These models provide a flexible approach that can be applied to deterministic, stochastic, and adversarial environments of practical interest.

\subsection{Contributions}
The main contributions of this paper are the following:

\begin{itemize}	
	\item Expanding on previous waveform selection strategies, which either deterministically select a waveform \cite{Kershaw1994}, rely upon a restrictive set of assumptions about the target channel \cite{Bell1993}, or require significant experience to achieve good performance \cite{Thornton2020}, we model the closed-loop waveform selection process as a sequential decision process using side information obtained from spectrum sensing. Performance measures are discussed and a cost function that considers interference avoidance and available bandwidth utilization is developed. The formulation also considers mitigation of harmful distortion effects due to pulse-agility by implementing a time-varying constraint on the waveform catalog.
	
	\item The problem is addressed in a computationally feasible manner using a linear contextual bandit online learning framework. Stochastic and adversarial models are introduced, which allow the radar to perform reliably across a wide range of scenarios while making minimal assumptions regarding the physical environment. Thompson Sampling (TS) and EXP3 algorithms which provide near-optimal performance in the respective stochastic and adversarial settings are described.
	
	\item A simulation study involving radar-communications coexistence scenarios as well as radar-adaptive jammer adversarial scenarios is performed. The proposed algorithms are shown to be sample efficient, and result in improved detection and tracking performance relative to a non-adaptive radar and a simple reactive strategy.
	
	\item Considerations for tracking a single target are discussed, and selection of optimal waveform parameters using information from a Kalman tracking system to reduce the decision space of the contextual bandit algorithm is described. In simulation, improved tracking performance is shown compared to a static bandwidth allocation and a naive reactive strategy.
\end{itemize}

\subsection{Notation}
The following notation is used. Bold upper (and lower) case letters denote matrices (and vectors) $\mathbf{X}$ (and $\mathbf{x}$). $\mathbf{X}^{T}$ is the transpose operation. Upper case script letters, such as $\mathcal{A}$, denote sets. $\mathbf{I}_{d}$ is the $d \times d$ identity matrix. $\mathbf{0}_{d}$ is a length $d$ vector of zeros. $\langle \cdot, \cdot \rangle$ is the inner product operation. $\lVert \cdot \rVert$ is the $\ell_{2}$-norm. $\mathbb{P}(\cdot)$ is a probability measure on a measurable space $(\Omega,\mathcal{F})$. $\mathbb{R}$ and $\mathbb{N}$ denote the sets of all real and natural numbers, respectively. $\mathbbm{1}\{\cdot\}$ is the indicator function, which returns $1$ if the argument is true and $0$ otherwise. $\mathcal{O}(\cdot)$ is the standard Bachmann-Landau notation and $\tilde{\mathcal{O}}(\cdot)$ similarly describes asymptotic behavior while ignoring logarithmic terms.

\subsection{Organization}
The remainder of the paper is organized as follows. In Section \ref{se:form}, the adaptive waveform selection problem is described mathematically, along with the challenges that highlight the need for sample-efficient online learning. Section \ref{se:online} discusses the stochastic and adversarial linear bandit formulations, as well as the proposed TS and EXP3 algorithms to solve the problem efficiently. In Section \ref{se:track}, relevant considerations for the application of the online learning approach to single-target tracking are discussed. In Section \ref{se:sim}, the proposed algorithms are extensively evaluated coexistence and an adaptive jamming simulations. Section \ref{se:concl} provides concluding remarks.

\section{Problem Formulation}
\label{se:form}
\subsection{System Model}
Consider a stationary and monostatic radar system located at the origin. The radar participates in a sequential decision process in which time is slotted into a sequence of discrete intervals $t = 1,2,...,n$, and a waveform must be selected at each step. Each time index $t \in \mathbb{N}$ thus corresponds to the radar's $t^{\text{th}}$ pulse repetition interval (PRI). The radar wishes to detect targets in the physical environment, or \emph{scene}, and measure a time-evolving random vector of target parameters $\mathbf{z}_{t} \in \mathcal{Z}$. The radar operates in a shared channel with center frequency $f_{\texttt{c}}$ and bandwidth $B$. The shared channel may contain one or more communication systems, whose transmission strategy is unknown to the radar \textit{a priori} and may result in mutual interference if there is temporal and spectral overlap with the radar's transmission. Within each PRI, the radar must select a linear frequency modulated (LFM) chirp waveform\footnote{This approach is applicable to broader waveform catalogs, which may include other pulse modulation schemes such as FM noise, phase coded, or non-linear FM waveforms, but we focus on LFM here for simplicity and favorable ambiguity properties in measuring both range and Doppler information.} $w_{i}$ from a finite indexed catalog $\mathcal{W} = \{ w_{i} \}_{i = 1}^{W}$. The time-domain transmitted waveform is given by \cite{skolnik}
\begin{equation}
w_{i}(t^{\prime}) = A \; \exp(-t'^{2} / T^{2}) \cos{(2 \pi f_{i} t^{\prime} + \pi \alpha_{i} t^{\prime 2})},
\label{eq:tx}
\end{equation}
where $t^{\prime}$ corresponds to continuous \textit{`fast time'} within a PRI, $A$ is a constant amplitude, $T$ is the pulse duration, $f_{i}$ is the carrier frequency, and $\alpha_{i}$ is the slope of the up-chirp frequency, which dictates the signal bandwidth, given by $\texttt{BW}_{i} = T \alpha_{i}$. It is assumed that the narrowband property $\texttt{BW}_{i} << f_{\texttt{i}}$, holds for each $w_{i} \in \mathcal{W}$. A reasonable objective for the radar is to select the sequence of waveforms $\{w_{1}, w_{2},...,w_{n}\}$ which results in accurate and unambiguous estimation of the random target parameters of interest, represented by the random vector $\mathbf{z}_{t} \in \mathcal{Z}$. However, the radar is operating in a time-varying and interference-limited environment, so it must utilize information about the current state of the channel to select a waveform at each PRI. Thus, the waveform selection process can be modeled as a sequential decision process with side information. To precisely define the problem, we now examine the underlying randomness in the target channel and formulate the radar's spectrum sensing process.

\subsection{Target Channel Model and Spectrum Sensing Process}
The selected waveform at the $t^{\text{th}}$ PRI, $w_{t}$, is an input to a target channel. The target channel is a stochastic mapping between transmitted and received waveforms which consists of losses due to the forward path from the radar to the target of interest, the scattering properties of the target itself, and the path from the target back to the radar. The target channel is said to have a random and time-evolving state $h_{t} \in \mathcal{H}$, where $|\mathcal{H}| < \infty$. This state represents the scattering characteristics of the target as well as disturbances due to noise, interference, and clutter due to the forward and backward channels. The channel state, input and target parameter vector dictates the received signal via the following probability mass function
\begin{equation}
	p(y_{t}|w_{t},h_{t}) = \sum_{\mathbf{z} \in \mathcal{Z}} p(y_{t}|\mathbf{z},h_{t}) p(\mathbf{z}|w_{t},h_{t}),
\end{equation}
where $y_{t} \in \mathcal{Y}$ is the received signal\footnote{The set of all possible received signals $\mathcal{Y}$ is not necessarily finite.} at PRI $t$. Due to the unknown nature of the radar's environment, very little can be assumed about this probabilistic model in general, which emphasizes the need for an online learning approach.

The channel state is not necessarily independent of the previously transmitted waveforms\footnote{Arbitrary dependence between the transmitted waveform and channel state is reflective of cases where other systems may react to the radar's behavior.}, and evolves via a finite-memory stochastic process with transition probabilities 
\begin{equation}
p \left(h_{t+1}|\{h_{s}\}_{s=t-V}^{t},\{w_{s}\}_{s=t-V}^{t} \right),
\end{equation}
for some fixed $V < \infty$ that is unknown to the radar \emph{a priori}. The received signal is then dependent on the transmitted waveform, target parameters, and the channel state
\begin{multline}
y_{t}(t^{\prime},h_{t},\mathbf{z}_{t},w_{t}) =  \sum_{k = 1}^{N_{\texttt{tar}}} g_{k}[\mathbf{z}_{t}] w_{t}(t^{\prime}-\tau_{k}[\mathbf{z}_{t}]) \\ \exp\left(-j v_{k}[\mathbf{z}_{t}] f_{t} (t^{\prime} - \tau_{k}) / c \right) + \tilde{n}[h_{t}],
\end{multline}
where $N_{\texttt{tar}}$ is the number of targets in the scene, $g_{k}[\mathbf{z}_{t}]$, $\tau_{k}[\mathbf{z}_{t}]$, and $v_{k}[\mathbf{z}_{t}]$ are the gain, round-trip delay, and velocity due to target $k$ given target parameter vector $\mathbf{z}_{t}$, $c$ is the speed of light, and $\tilde{n}[h_{t}]$ is a random quantity which captures disturbances due to interference, clutter, and noise given channel state $h_{t}$. Due to the random nature of the noise process, it is impossible to precisely determine $y_{t}$ from the transmitted $w_{t}$ even when the target properties are known \textit{a priori}. Since we wish to avoid making strong assumptions about the channel's behavior, we note that these probabalistic relationships must be learned through repeated experience.

To aid in the waveform selection process, it is assumed that the radar has the ability to passively sense activity in the shared channel during each PRI, using the approach described in \cite{Martone2018}. The spectrum sensing process yields a vector $\mathbf{\hat{s}}_{t} = [\hat{s}_{1},...,\hat{s}_{S}]$, which contains information about the interference power in a fixed number $S \in \mathbb{N}$ of sub-channels of predetermined size. Each element of the estimated interference vector is a binary value,\footnote{The total number of unique values the interference vector can take is thus $2^{S}$} $\hat{s}_{i} \in \{0,1\}$ where zero denotes that the average interference power in band $i$ is below harmful threshold $I$ and one corresponds to average interference power above $I$ in band $s_{i}$. For example, in the case of $S = 10$ sub-channels, the estimated interference vector $\mathbf{\hat{s}}_{t} = [1,1,0,0,0,0,0,0,0,0]$ corresponds to harmful interference in only the first and second sub-channels. 

The finite set of possible interference vectors is denoted by $\mathscr{S}$. In general, it is not guaranteed that the binary interference vector measured by the spectrum sensing process precisely captures the activity in the channel. Thus, the sensed interference vector $\mathbf{\hat{s}}_{t}$ can be thought of as \textit{side information}, used to estimate the time-evolving binary interference state $\mathbf{s}_{t} \in \mathscr{S}$. The binary interference state $\mathbf{s}_{t} = [s_{1},...,s_{S}]$ is defined similar to $\mathbf{\hat{s}}_{t}$. For example $\mathbf{s}_{t} = [0,0,1,1,0,0,0,0,0,0]$ corresponds to harmful interference in only the third and fourth of $S = 10$ sub-channels. 

\subsection{Waveform Selection Problem}
Taking the perspective of \cite{Bell1993}, a reasonable strategy for waveform selection would be to find a distribution of waveforms such that the resulting received signals provide a maximum amount of information about the random target parameters the radar wishes to measure. In a general sense, this corresponds to the optimization problem
\begin{equation}
	\begin{aligned}
\max_{p(W)} I(Y;Z|W, \hat{S}) &= H(Y|W=w_{i},\hat{S}=\mathbf{\hat{s}}_{t}) \\ &-H(Y|W=w_{i},Z=\mathbf{z}_{t},\hat{S}=\mathbf{\hat{s}_{t}})\\
							  &= \sum_{k} p(y_{k}|w_{j},\mathbf{\hat{s}}_{t}) \log p(y_{k}|w_{j},\mathbf{\hat{s}}_{t})  \\ &\; \; \; + \sum_{i} \sum_{j} p(z_{j}|w_{i},\mathbf{\hat{s}}_{t})\\ & p(y_{k}|z_{i},\mathbf{\hat{s}}_{t}) \log p(y_{k}|z_{i},\mathbf{\hat{s}}_{t}) 
\end{aligned}
\end{equation}
where $I(\cdot \; ; \; \cdot)$ is the mutual information between random variables, $H(\cdot)$ is the Shannon entropy, $W$ and $Y$ are random variables describing the channel input and output, $Z$ is a random variable describing the target parameters, and $\hat{S}$ is a random variable describing information gained from spectrum sensing. However, the unknown probability distributions characterizing the target channel and lack of \textit{a priori} information about the scattering properties of the target makes this quantity hard to maximize directly. 

Additionally, it is not possible to directly estimate target range and Doppler shift from a single return alone \cite{Levanon2004}. Thus, the radar must gradually learn characteristics of the target channel via repeated experience. The learning process involves a fundamental trade-off between \emph{exploration} and \emph{exploitation} encountered in online decision problems. To formulate a sequential decision problem for radar waveform selection, we develop a cost-function which can be calculated using a combination of radar feedback from a single pulse and spectrum sensing information. This cost-based sequential optimization procedure seeks to indirectly optimize the mutual information by avoiding frequency ranges containing harmful interference while utilizing a large enough bandwidth to reduce target ambiguities. The sequential cost minimization problem can be generally formulated as follows. During each PRI, the CR wishes to select the waveform $w_{t}^{*}$ which solves the optimization problem 
\begin{equation}
w_{t}^{*} = \underset{w_{i} \in \mathcal{W}}{\argmin} \; \; \E \left[ C(w_{i},\mathbf{s}_{t})|\mathbf{\hat{s}_{t}}, \mathcal{F}_{t-1} \right], 
\end{equation}
where the expectation is taken over randomness in the target channel, $C: \mathcal{W} \times \mathscr{S} \mapsto [0,1]$ is a function which decides the relative \emph{cost} associated with transmitting $w_{i} \in \mathcal{W}$ when the true interference vector is $\mathbf{s}_{t}$, and $\mathcal{F}_{t-1}$ is the smallest $\sigma$-algebra \footnote{For the reader unfamiliar with $\sigma$-algebras, $\mathcal{F}_{t-1}$ can be simply be thought of as a set containing all the events which have occurred until time $t-1$.} generated from the history of transmitted waveforms, interference vector estimates, and observed costs until PRI $t-1$. It is assumed that the radar can store $\mathcal{F}_{t-1}$ in memory each PRI to enable information gain as the radar gains experience. Due to time-delay or possible estimation error, the true interference vector $\mathbf{s}_{t}$ is unknown when the waveform is selected. Therefore, the radar must make an informed guess as to which waveform will minimize the cost function given the interference estimate $\mathbf{\hat{s}_{t}}$ and the history $\mathcal{F}_{t-1}$.

\begin{figure*}[t]
	\centering
	\begin{subfigure}[]
		\centering
		\includegraphics[scale=0.5]{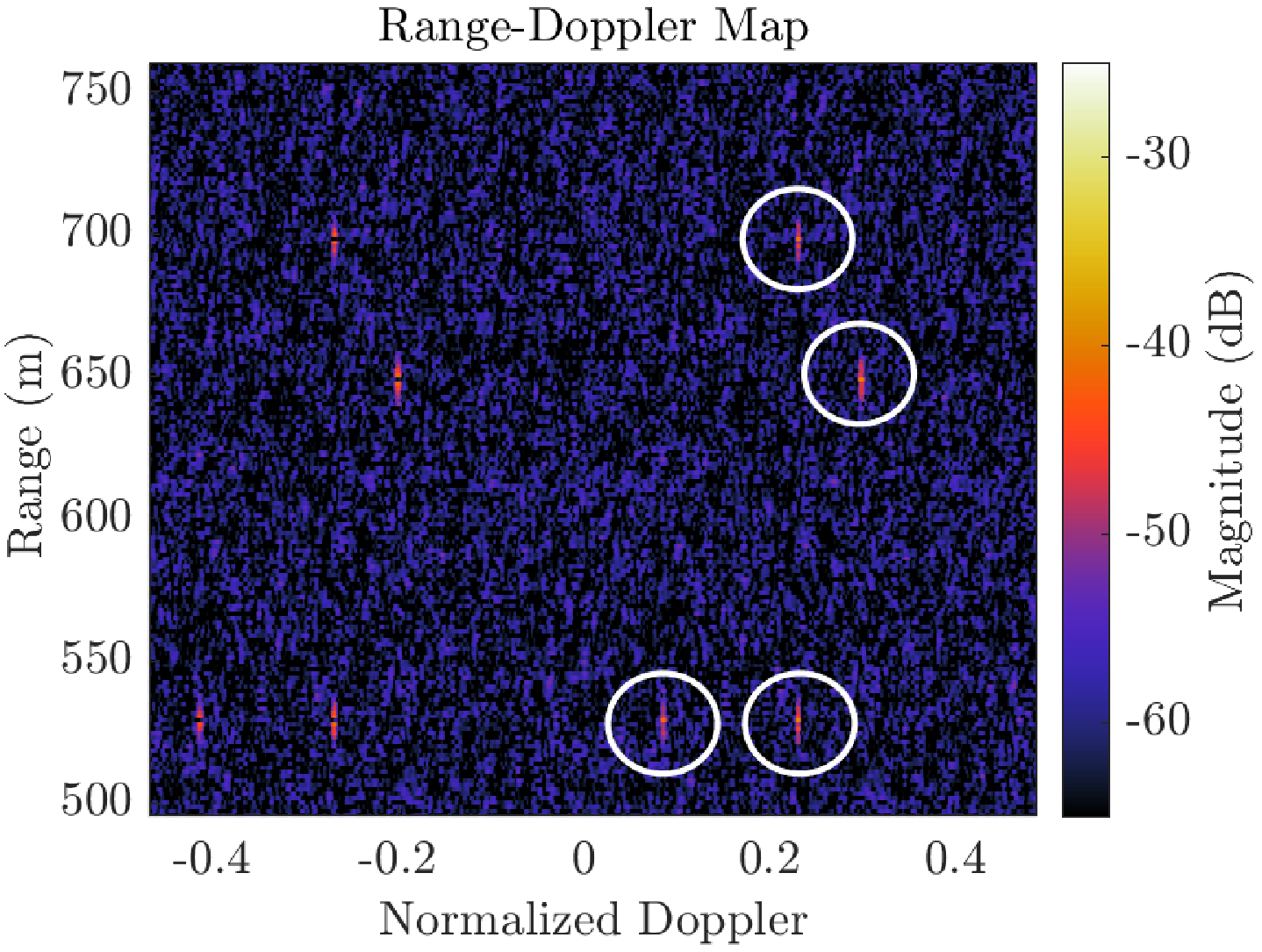}
	\end{subfigure}
	\begin{subfigure}[]
		\centering
		\includegraphics[scale=0.5]{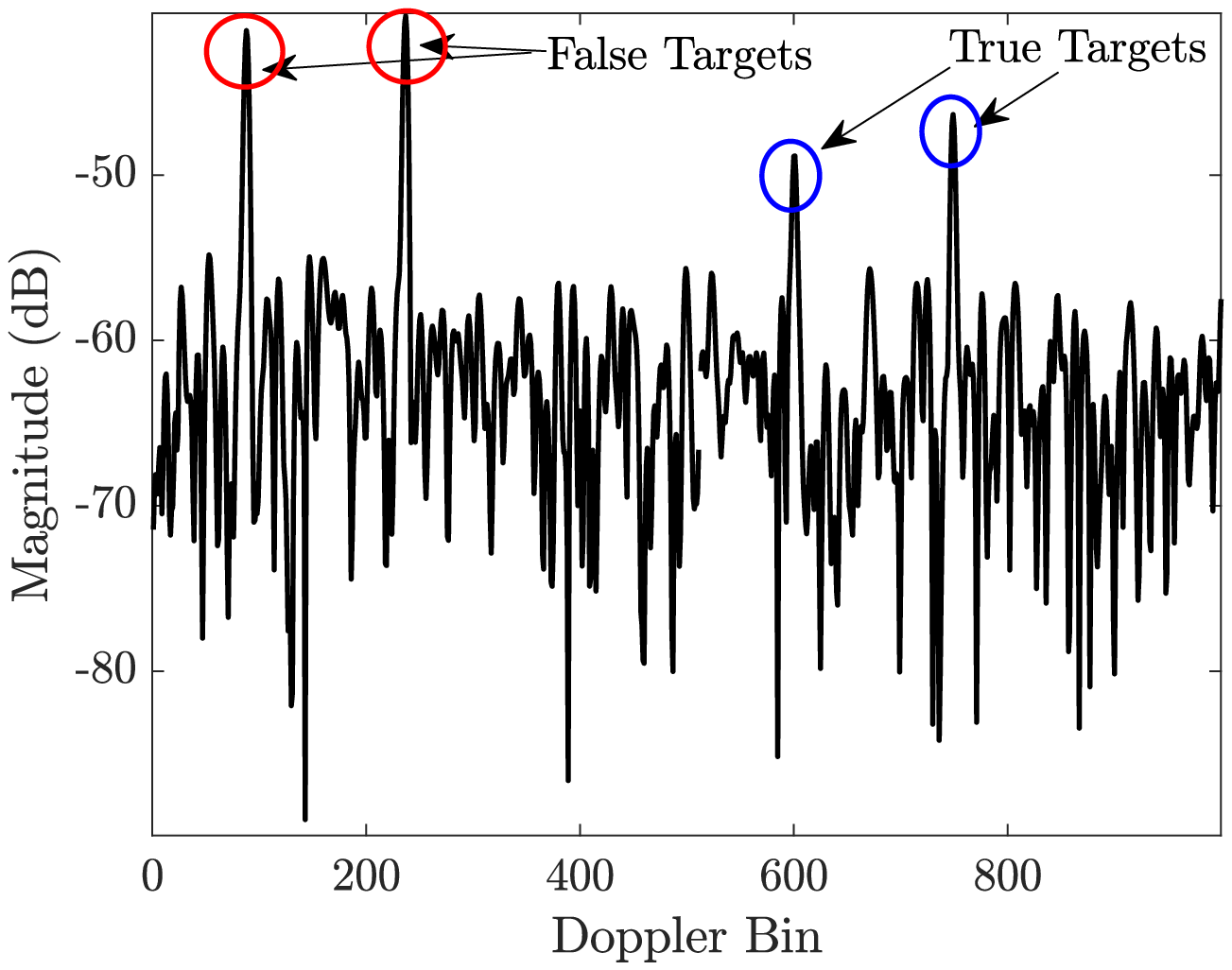}
	\end{subfigure}
	\caption{\textsc{An example of harmful distortion effects due to pulse agility.} In (a) the range-Doppler map is seen. The four true targets are located in white circles. Other features in the image are due to distortion effects and noise. In (b) the Doppler profile at a range of 530m is examined. Pulse agility results in two sharp peaks that could be interpreted by a detection algorithm as false targets.}
	\label{fig:traj}
\end{figure*}

We now proceed to a description of the cost function, which requires the following definitions.
\begin{definition}
	Define the collision bandwidth as 
	\begin{multline}
	\texttt{BW}_{c}(w_{i},\mathbf{s}_{t}) \triangleq \frac{B}{S}  \sum_{\ell = 1}^{S} \\ \left( \mathbbm{1} \left\{\left( \frac{\ell B}{S} - \frac{B}{2S} \right) \in [f_{i}-\texttt{BW}_{i}, f_{i}+\texttt{BW}_{i}] \right\} \times \mathbbm{1}\{s_{\ell} = 1 \} \right), 
	\end{multline}
	where $B$ is the bandwidth of the shared channel, $f_{i}$ is the carrier frequency of $w_{i}$ and $\texttt{BW}_{i}$ is the bandwidth of $w_{i}$. Thus, $\texttt{BW}_{c}$ corresponds to the portion of the shared channel bandwidth occupied by both the radar's waveform $w_{i}$ and the interference $\mathbf{s}_{t}$. Further, $0 \leq \texttt{BW}_{c} \leq B$.
\end{definition}
\begin{definition}
	Define the missed bandwidth as
	\begin{equation}
	\texttt{BW}_{miss}(w_{i},\mathbf{s}_{t}) \triangleq \texttt{BW}_{i^{*}} - \texttt{BW}_{i},
	\end{equation}
	where $BW_{i^{*}}$ is the bandwidth of waveform $w_{i^{*}} \in \mathcal{W}$ given by
	\begin{equation}
	\begin{aligned}
		w_{i^{*}} = \quad &\underset{w_{i} \in \mathcal{W}}{\argmax} \quad \texttt{BW}_{i}\\
		&\text{subject to} \quad \texttt{BW}_{c}(w_{i},\mathbf{s}_{t}) = 0
	\end{aligned}
	\end{equation}	
	 If there is no $w_{i}$ in $\mathcal{W}$ which has zero collision bandwidth with $\mathbf{s}_{t}$, then $\texttt{BW}_{miss} = 0$. Thus, $0 \leq \texttt{BW}_{miss} \leq B$.
\end{definition}
\begin{definition}
	Let the waveform cost function be\footnote{The notation $C_{t} = C(w_{t},\mathbf{s}_{t})$ is also used for brevity.}
	\begin{equation}
	C(w_{t},\mathbf{s}_{t}) \triangleq \beta_{1} \texttt{BW}_{c}(w_{t},\mathbf{s}_{t}) + \beta_{2} \texttt{BW}_{miss}(w_{t},\mathbf{s}_{t}) 
	\label{eq:cost}
	\end{equation}
	where the parameters $0 \leq \beta_{1},\beta_{2} \leq \frac{1}{B}$ are selected such that $C(w_{i},\mathbf{s}_{t})$ is bounded in $[0,1]$. The weighting of these parameters define the radar's operation preference. To calculate the cost, it is assumed the radar can recover the interference state vector $\mathbf{s}_{t}$ from the most recent received signal $y_{t}$ via \texttt{SINR} estimation.
\end{definition}
\begin{remark}
	The cost function (\ref{eq:cost}) is used as a surrogate measure for received target information in a single pulse. Minimization of the collision bandwidth ensures that the targets of interest can be reliably detected in an interference-limited scenario (ie. the interference-to-noise ratio is larger than the signal-to-noise ratio). Minimization of the missed bandwidth ensures sufficient range resolution for estimation, as dictated by the Cramér-Rao lower bound for range estimation \cite{CRLB}. 
\end{remark}
\begin{prop}
	The CR cost function $C(w_{i},\mathbf{s}_{t})$ is locally Lipschitz continuous in the first argument, meaning that for any two waveforms $w_{j} \in \mathcal{W}$ and $w_{k} \in \mathcal{W}$, $\lvert C(w_{j}) - C(w_{k}) \rvert \leq \mathcal{L}(w_{j},w_{k})$, where $\mathcal{L}$ is a metric of distance between waveforms.
\end{prop}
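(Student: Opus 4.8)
The plan is to fix the interference state $\mathbf{s}_{t} \in \mathscr{S}$ and bound the cost difference componentwise. Applying the triangle inequality to the two terms of (\ref{eq:cost}) gives
\begin{equation*}
\lvert C(w_{j}) - C(w_{k})\rvert \leq \beta_{1}\lvert \texttt{BW}_{c}(w_{j},\mathbf{s}_{t}) - \texttt{BW}_{c}(w_{k},\mathbf{s}_{t})\rvert + \beta_{2}\lvert \texttt{BW}_{miss}(w_{j},\mathbf{s}_{t}) - \texttt{BW}_{miss}(w_{k},\mathbf{s}_{t})\rvert ,
\end{equation*}
so it suffices to control each term by a quantity depending only on the waveform parameters $(f_{i},\texttt{BW}_{i})$. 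The natural candidate metric will turn out to be a weighted combination of $\lvert f_{j}-f_{k}\rvert$ and $\lvert \texttt{BW}_{j}-\texttt{BW}_{k}\rvert$.

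The missed-bandwidth term is immediate. Since $\texttt{BW}_{i^{*}}$ in the definition of $\texttt{BW}_{miss}$ depends on $\mathbf{s}_{t}$ alone and not on the waveform being evaluated, it cancels in the difference, giving $\lvert \texttt{BW}_{miss}(w_{j},\mathbf{s}_{t}) - \texttt{BW}_{miss}(w_{k},\mathbf{s}_{t})\rvert = \lvert \texttt{BW}_{j}-\texttt{BW}_{k}\rvert$ (and the difference is $0$ in the degenerate case where no collision-free waveform exists). Hence this term is genuinely $1$-Lipschitz in the bandwidth coordinate and poses no difficulty.

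The collision-bandwidth term is the main obstacle, because $\texttt{BW}_{c}$ is a step function of the waveform: it is a sum of indicators and therefore jumps as the frequency support $I_{i} = [f_{i}-\texttt{BW}_{i}, f_{i}+\texttt{BW}_{i}]$ sweeps across the sub-channel centers $c_{\ell} = \ell B/S - B/(2S)$. I would write the difference as $\tfrac{B}{S}\sum_{\ell}\big(\mathbbm{1}\{c_{\ell}\in I_{j}\} - \mathbbm{1}\{c_{\ell}\in I_{k}\}\big)\mathbbm{1}\{s_{\ell}=1\}$ and bound each summand in absolute value by $\mathbbm{1}\{c_{\ell}\in I_{j}\triangle I_{k}\}$, so that only centers lying in the symmetric difference of the two supports contribute. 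Because the centers are equispaced with gap $B/S$ and $I_{j}\triangle I_{k}$ has at most two connected components, the number of contributing centers is at most $\tfrac{S}{B}\lvert I_{j}\triangle I_{k}\rvert + 2$, while for two intervals $\lvert I_{j}\triangle I_{k}\rvert \leq 2\max(\lvert f_{j}-f_{k}\rvert,\lvert \texttt{BW}_{j}-\texttt{BW}_{k}\rvert)$. Collecting the factors of $B/S$ yields $\lvert \texttt{BW}_{c}(w_{j},\mathbf{s}_{t})-\texttt{BW}_{c}(w_{k},\mathbf{s}_{t})\rvert \leq 2\max(\lvert f_{j}-f_{k}\rvert,\lvert \texttt{BW}_{j}-\texttt{BW}_{k}\rvert) + 2B/S$, uniformly in $\mathbf{s}_{t}$.

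The residual additive constant $2B/S$ reflects precisely the quantization of the sub-channel grid and is exactly why the statement asserts only \emph{local} Lipschitz continuity: an arbitrarily small parameter change can still flip a single indicator. I would absorb it using the finiteness of $\mathcal{W}$. Since distinct waveforms have distinct parameter pairs, $\delta \triangleq \min_{j\neq k}\big(\lvert f_{j}-f_{k}\rvert + \lvert \texttt{BW}_{j}-\texttt{BW}_{k}\rvert\big) > 0$, and since $C$ is valued in $[0,1]$ the cost difference never exceeds $1$. Defining $\mathcal{L}(w_{j},w_{k}) = \lambda\big(\lvert f_{j}-f_{k}\rvert + \lvert \texttt{BW}_{j}-\texttt{BW}_{k}\rvert\big)$ with $\lambda$ large enough to dominate both the linear bound above for nearby waveforms and the ratio $1/\delta$ over the remaining finitely many pairs makes $\mathcal{L}$ a genuine metric — it is a positively scaled $\ell_{1}$ distance on the parameter plane, so nonnegativity, symmetry, the triangle inequality, and identity of indiscernibles are inherited — and guarantees $\lvert C(w_{j})-C(w_{k})\rvert \leq \mathcal{L}(w_{j},w_{k})$ for every pair. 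The only delicate point, and the one I would treat most carefully, is the counting of grid centers in $I_{j}\triangle I_{k}$ together with the boundary convention when a center coincides with an interval endpoint; everything else follows from the triangle inequality and the finiteness of the catalog.
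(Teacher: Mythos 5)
Your argument is correct, and it takes a genuinely more quantitative route than the paper's own proof. Appendix A of the paper simply declares the distortion function $D(w_{j},w_{k})=\gamma_{1}\lVert f_{j}-f_{k}\rVert^{2}+\gamma_{2}\lVert \texttt{BW}_{j}-\texttt{BW}_{k}\rVert^{2}$ to be the Lipschitz metric and argues qualitatively that $D\to 0$ forces similar spectral overlap with the fixed $\mathbf{s}_{t}$, disposing of the missed-bandwidth term with ``a similar argument holds''; it never confronts the fact that $\texttt{BW}_{c}$ is a step function of $(f_{i},\texttt{BW}_{i})$, which is exactly where a naive continuity argument breaks. You confront it directly: the reduction of the indicator differences to the symmetric difference $I_{j}\triangle I_{k}$, the bound $\lvert I_{j}\triangle I_{k}\rvert\leq 2\max(\lvert f_{j}-f_{k}\rvert,\lvert\texttt{BW}_{j}-\texttt{BW}_{k}\rvert)$, and the grid-counting estimate give an explicit modulus $2\max(\cdot,\cdot)+2B/S$, after which the quantization offset $2B/S$ is absorbed using the finiteness of the catalog and $C\in[0,1]$; the exact cancellation of $\texttt{BW}_{i^{*}}$ in the missed-bandwidth term is likewise cleaner than the paper's one-line dismissal. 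Two things your version buys: your scaled $\ell_{1}$ distance is an actual metric, whereas the paper's $D$, being a sum of squared differences, does not satisfy the triangle inequality, so the ``metric'' invoked in Appendix A is strictly speaking only a dissimilarity; and your identification of the additive $2B/S$ term as the reason the claim can only be \emph{local} (catalog-dependent) is the right diagnosis of why the continuous-limit reasoning in the paper is incomplete. The only caveat is that the closing finiteness step renders the bare statement nearly automatic (any function on a finite set of distinct points admits some Lipschitz constant), but that is a weakness of the proposition as posed rather than of your proof, and your explicit pre-constant is what would actually be needed to exploit waveform similarity in the learning algorithm as Remark 2 intends.
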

\begin{proof} \renewcommand{\qedsymbol}{}
	See Appendix A.
\end{proof}
\begin{remark}
The waveform selection set forms a \emph{metric space} $(\mathcal{W},\mathcal{L})$, which allows for similarity between waveforms in terms of metric $\mathcal{L}$ to be exploited by online learning algorithms. This is beneficial for cases where $|\mathcal{W}|$ is large, since the radar can learn information about several waveforms from a single transmission.
\end{remark}

By selecting waveforms which yield low average cost in terms of (\ref{eq:cost}), the CR is equivalently attempting to minimize the cumulative \textit{strong regret} experienced in period $n$, defined by
\begin{equation}
\texttt{Regret}(n) \triangleq \sum_{t = 1}^{n} \left[ C({w_{t}},\mathbf{s}_{t}) - C({w_{t}^{*}},\mathbf{s}_{t})  \right],
\label{eq:regret} 
\end{equation}
where $w_{t}^{*}$ is the waveform which minimizes $C(\cdot,\mathbf{s}_{t})$ in PRI $t$ and $w_{t}$ is the waveform transmitted by the CR in PRI $t$. Since calculation of (\ref{eq:regret}) requires knowledge of $w_{t}^{*}$ at each step, it must be calculated in hindsight. Thus, (\ref{eq:regret}) cannot be minimized directly, and online optimization is used to select waveforms in each PRI such that $w_{t}^{*}$ is selected by the radar as often as possible in expectation.

Since a variety of waveforms may be transmitted within a CPI, we also wish to mitigate distortion effects in the range-Doppler image due to high sidelobe levels as a result of pulse-diversity (discussed in Chapter 9 of \cite{Levanon2004}). An example of undesirable effects in the range-Doppler image can be seen in Figure \ref{fig:traj}. To achieve a desirable level of discrimination in both delay and Doppler dimensions, we adopt the viewpoint of the radar as an imaging system \cite{Guey1998}, and attempt to limit blurring in the point-spread function by constraining the set of available waveforms at each time step. This is performed by limiting the distance between adjacent transmitted waveforms using the following distortion function as a metric.

\begin{definition}
	Let the distortion function be
	\begin{equation}
		D(w_{t},w_{t-1}) \triangleq \gamma_{1} \lVert f_{t} - f_{t-1} \rVert^{2} + \; \gamma_{2} \lVert \texttt{BW}_{t} - \texttt{BW}_{t-1} \rVert^{2} 
		\label{eq:disto}
	\end{equation}
	where $f_{t}, f_{t-1}$ and $\texttt{BW}_{t}, \texttt{BW}_{t-1}$ are the respective center frequencies and bandwidths of $w_{t}$ and $w_{t-1}$. The parameters $\gamma_{1}$ and $\gamma_{2}$ are selected such that $D(w_{t},w_{t-1}) \in [0,1]$ for all $w_{t} \in \mathcal{W}$. The distortion function can be used as a distance metric to define the metric space $(\mathcal{W},D)$.
\end{definition}

Then the radar can solve the following constrained optimization problem at each step
\begin{equation}
	\begin{aligned}
		& \underset{w_{i} \in \mathcal{W}}{\text{minimize}}
		& & \E[C(w_{i},\mathbf{s}_{t})| \mathbf{\hat{s}}_{t}, \mathcal{F}_{t-1}] \\
		& \text{subject to}
		& & D(w_{i},w_{t-1}) < \hat{d},
	\end{aligned}
	\label{eq:const}
\end{equation}
where $\E[C({w_{i}},\mathbf{s}_{t})|\mathbf{\hat{s}}_{t}, \mathcal{F}_{t-1})]$ is the posterior expected cost of transmitting $w_{i}$ based on the information observed until the previous PRI and $\hat{d}$ is the distance corresponding to a tolerable level of distortion. This time-varying constraint is implemented by assembling a subset of $\mathcal{W}$ during each PRI. This limited waveform catalog, defined at each PRI to be
\begin{equation}
 \mathcal{W'} \triangleq \{w_{i} \in \mathcal{W}: D(w_{i},w_{t-1}) < \hat{d} \},
\end{equation} 
 ensures that the constraint in (\ref{eq:const}) is met. Further, this time-varying waveform catalog can be utilized directly in the linear contextual bandit learning framework described in the next section.

\section{Online Learning Framework}
\label{se:online}
To select the waveform which is optimal in terms of expected cost during each PRI, the radar must use past experience to estimate $\E[C(w_{i},\mathbf{s}_{t})|\mathbf{\hat{s}}_{t}, \mathcal{F}_{t-1}]$ for each unique waveform and interference state pair $(w_{i}, \mathbf{\hat{s}}_{t}) \in \mathcal{W} \times \mathscr{S}$. Thus, a natural trade-off between \textit{exploration} and \textit{exploitation} arises. Each waveform must be transmitted a sufficient number of times across different interference contexts such that the expected cost can be reliably predicted using only the information in $\mathbf{\hat{s}_{t}}$ and $\mathcal{F}_{t-1}$. Additionally, the total number of sub-optimal waveforms transmitted in period $n$ should be minimized. To balance exploration and exploitation, the problem is formulated using both stochastic and adversarial linear contextual bandit models. These models generalize the multi-armed bandit problem by allowing the decision-maker to utilize side information when selecting actions. To address concerns regarding distortion effects, both schemes utilize a time-varying action set that limits actions which may lead to distortion effects in the range-Doppler map as defined in (\ref{eq:disto}).

\subsection{Stochastic Linear Contextual Bandits and Thompson Sampling}
We first examine a stochastic linear contextual bandit learning model, under which the cost at each PRI is characterized by the following structure
\begin{equation}
	C(w_{i}, \mathbf{s}_{t}) = \langle \boldsymbol{\theta}, \mathbf{x}_{w_{i},t} \rangle + \eta_{t},
	\label{eq:stochmodel}
\end{equation}
where $\boldsymbol{\theta} \in \mathbb{R}^{d}$ is a parameter vector that the radar wishes to learn, $\mathbf{x}_{w_{i},t} \in \mathbb{R}^{d}$ is a \emph{context} vector associated with each waveform $w_{i}$ at time $t$. The context vector is assembled using information about previous observations, transmitted waveforms and costs from the $\sigma$-algebra $\mathcal{F}_{t-1}$, which is stored in memory. In this implementation, the context vector contains the following features, 
\begin{equation}
	\begin{aligned}
		\xi_{1} = \bar{C}(w_{i},\mathbf{\hat{s}}_{t}), \; \; & \xi_{2} = \frac{\sum_{\ell}(C_{\ell}(w_{i},\mathbf{\hat{s}}_{t})-\bar{C}(w_{i},\mathbf{\hat{s}}_{t}))^{2}}{N_{c}-1},\\ & \text{and} \; \; \xi_{3} = C_{N_{c}}(w_{i},\mathbf{\hat{s}}_{t}),
	\end{aligned}
\end{equation}
where $N_{c}$ is the number of times the context-action pair $(w_{i},\mathbf{\hat{s}}_{t})$ has been encountered. $\xi_{1}$ is the sample mean of all observed $C(w_{i},\mathbf{\hat{s}}_{t})$ instances where the sum is taken over a sub $\sigma$-algebra of $\mathcal{F}_{t-1}$ that contains only instances of the context-action pair of interest. $\xi_{2}$ is the sample variance of costs over the same sub $\sigma$-algebra containing instances of $C(w_{i},\mathbf{\hat{s}}_{t})$. $\xi_{3}$ is the most recently observed instance of $C(w_{i},\mathbf{\hat{s}}_{t})$, which gives near-term information. 

Returning to the description of the model (\ref{eq:stochmodel}), $\eta_{t}$ is a random disturbance, which reflects cases in which the inner product $\langle \boldsymbol{\theta}, \mathbf{x}_{w_{i},t} \rangle$ does not explicitly predict the cost, possibly due to fluctuations in the environment or estimation errors. The linear relationship between the context vectors and the costs through an inner product with $\boldsymbol{\theta}$ allows for learning to transfer between contexts, which is a powerful tool when particular contexts may occur infrequently.

It is assumed that the distribution of $\eta_{t}$ is conditionally 1-subgaussian \cite{Lattimore2020}, which precisely means that for any $\lambda \in \mathbb{R}$
\begin{equation}
	\E[\exp(\lambda \eta_{t}) |\mathcal{F}_{t}] \leq \exp\left(\frac{\lambda^{2}}{2}\right) \; \; \; \textit{almost surely},
\end{equation}
which implies that $\eta_{t}$ has a tail that decays faster than a Gaussian distribution. Practically, this means that there exists a parameter vector $\boldsymbol{\theta}$ such that knowledge of $\boldsymbol{\theta}$ will allow the radar to select the waveform with the lowest expected cost very often. In this setting, the regret can then be expressed as 
\begin{multline}
\texttt{Regret}(n)= \mathbb{E} [\textstyle \sum_{t=1}^{n} C_{t}(w_{t},\mathbf{s}_{t}) \\ - \textstyle \sum_{t=1}^{n} \min _{w_{i} \in \mathcal{W'}} \langle \boldsymbol{\theta}, \mathbf{x}_{w_{i}} \rangle ].
\end{multline}

Previous work in radar and communications has found the stochastic model to be viable for many wireless transmission problems due to the underlying randomness of the channel conditions \cite{dsa1,Thornton2020,Amuru2016}. Many efficient algorithms have been well-studied in the stochastic setting, such as upper confidence bound and $\epsilon$-greedy strategies \cite{Lattimore2020}. However, a Bayesian inspired heuristic called Thompson Sampling (TS) has attracted significant attention in the online learning literature due to near-optimal empirical performance on a variety of tasks, which has recently been by supplemented by theoretical results offering favorable performance guarantees as well as computationally efficient implementations \cite{agrawal}, \cite{infoTS}.
\begin{algorithm}[t]
	\setlength{\textfloatsep}{0pt}
	\caption{Constrained Linear Contextual Thompson Sampling Waveform Selection}
	\SetAlgoLined
	Initialize parameters $\mathbf{B}_{1} = \mathbf{I}_{d}$, $\hat{\boldsymbol{\theta}} = \mathbf{0}_{d}$, $\mathbf{f} = \mathbf{0}_{d}$, $\hat{d}$;\\
	\For{t = 2, ..., $n$}{
		\vspace{0.1cm}
		Sense spectrum to estimate interference vector $\mathbf{\hat{s}}_{t} = [s_{1},...s_{S}]$;\vspace{0.2cm}
		
		Create constrained action space $\mathcal{W'} = \{ w_{i} \in \mathcal{W}: D(w_{i}|w_{t-1}) < \hat{d} \}$; \vspace{0.2cm}
		
		Using $\mathbf{\hat{s}}_{t}$ and $\mathcal{F}_{t-1}$ assemble context vectors $\mathbf{x}_{w_{i},t} = [\xi_{1},...,\xi_{d}], \; \; \forall \; w_{i} \in \mathcal{W'}$;\vspace{0.2cm}
		
		Sample $\tilde{\boldsymbol{\theta}} \sim \mathcal{N}(\hat{\boldsymbol{\theta}}, \mathbf{B}_{t}^{-1})$;\vspace{0.2cm}
		
		Select LFM waveform $\mathbf{w}_{i}(t) = \underset{w_{i} \in \mathcal{W}}{\argmin} \langle \mathbf{x}_{w_{i},t}, \tilde{\boldsymbol{\theta}} \rangle$;\vspace{0.2cm}
		
		Observe cost $C(w_{t}, \mathbf{s}_{t})$; \vspace{0.2cm}
		
		Update distribution parameters $\mathbf{B}_{t} = \mathbf{B}_{t} + \mathbf{x}_{w_{i},t} \mathbf{x}_{w_{i},t}^{T}$, $\mathbf{f} = \mathbf{f} + \mathbf{x}_{w_{i},t} C_{t}$, and $\boldsymbol{\hat{\theta}} =\mathbf{B}_{t}^{-1} \mathbf{f}$;
	}
    \label{algo:tsamp}
\end{algorithm}	

TS simply involves selecting actions based on the posterior probability of being cost-optimal. The posterior distribution $\mathbb{P}(\boldsymbol{\theta}|\mathcal{F}_{t-1})$ is computed using Bayes' rule and a randomly initialized normal prior\footnote{The implementation in this paper considers Gaussian Thompson Sampling, which exploits the normal-normal conjugacy property to yield a normally distributed posterior from which samples can be efficiently generated.}, which can be easily updated. Given context vector $\mathbf{x}_{w_{i}}$ and parameter vector $\boldsymbol{\theta}$, the likelihood of receiving cost $C(w_{i},\mathbf{s}_{t})$ is
\begin{equation}
	L(C(w_{i},\mathbf{s}_{t})| \boldsymbol{\theta}) \sim \mathcal{N}(\mathbf{x}_{w_{i}}^{T} \boldsymbol{\theta}, v^{2}),
\end{equation}
where $v$ is an exploration parameter that specifies the algorithm. We can then place a Gaussian prior distribution on $\boldsymbol{\theta}$ given by
\begin{equation}
	\mathbb{P}(\boldsymbol{\theta}) \sim \mathcal{N} (\hat{\boldsymbol{\theta}}, v^{2} \mathbf{B}_{t}^{-1}).
\end{equation}

Applying Bayes' rule, the posterior distribution on $\boldsymbol{\theta}$ can then be written up to a constant factor as
\begin{equation}
	\begin{split}
		\mathbb{P}(\tilde{\boldsymbol{\theta}} | C(w_{i},\mathbf{s}_{t})) & \propto L(C(w_{i},\mathbf{s}_{t}) | \boldsymbol{\theta}) \mathbb{P}(\boldsymbol{\theta}) \\
		& \propto \mathcal{N}(\hat{\boldsymbol{\theta}}, v^{2}\mathbf{B}_{t}^{-1}),
	\end{split}
\end{equation}
where the posterior mean and covariance matrix can be expressed as
\begin{align}
		\mathbf{B}_{t} &= \mathbf{I}_{d} + \textstyle \sum_{\tau = 1}^{t-1} \mathbf{x}_{\mathbf{a}(\tau)}(\tau) \mathbf{x}_{\mathbf{a}(\tau)}^{T}(\tau), \\
		\hat{\boldsymbol{\theta}} &= \mathbf{B}_{t}^{-1} \textstyle \sum_{\tau=1}^{t-1} \mathbf{x}_{\mathbf{a}(\tau)}(\tau) r_{\mathbf{a}(\tau)}(\tau),
	\end{align}
where $\mathbf{I}_{d}$ is the $d$-dimensional identity matrix. Thus, the posterior estimate of the model $\tilde{\boldsymbol{\theta}}$ can be efficiently sampled from a $d$-dimensional multivariate normal distribution. The distribution parameters $\hat{\boldsymbol{\theta}}_{t}$ and $\mathbf{B}_{t}^{-1}$ can also be easily updated based on the context and reward received at time $t$.

The full procedure of building the posterior distribution and obtaining samples can be seen in Algorithm \ref{algo:tsamp}. We note that the TS algorithm is computationally efficient as long as (\ref{eq:const}) is easily solvable. This is the case when the inner product $\langle \boldsymbol{\theta}, \mathbf{x}_{w_{i},t} \rangle$ can be computed and a sample can be efficiently generated from $\mathbb{P}(\boldsymbol{\theta}|\mathcal{F}_{t-1})$, which is expected to hold for most waveform selection problems of practical interest.

The algorithm is only computationally limited by the context dimension $d$, as the only real-time tasks are sampling from a multivariate Gaussian distribution of dimension $d$ and updating the associated distribution parameters. Since the radar's waveform catalog is of finite cardinality $\lvert \mathcal{W} \rvert = W < \infty$, the action set is further constrained such that $\lvert \mathcal{W'} \rvert \leq \lvert \mathcal{W} \rvert$ each PRI, and the context dimensionality can be kept low for most waveform selection problems, the TS algorithm is very efficient in most practical cases. This is in contrast to other stochastic linear bandit algorithms with similar regret guarantees, such as linUCB, where the maximization step is often intractable for large action spaces \cite{Lattimore2020}.

\begin{prop}
	The expected (Bayesian) regret of Algorithm \ref{algo:tsamp} under the time-varying waveform catalog $\mathcal{W'}$ and 1-subgaussian disturbance assumption is 
	\begin{equation}
	   \E[\texttt{Regret}(n)] = \mathcal{O}(3\sqrt{n log(n)}),
	   \label{eq:bayesTS} 
	 \end{equation}
 	while the frequentist (worst-case) regret is given by
 	\begin{equation}
 		\texttt{Regret}(n) = \tilde{\mathcal{O}}(\sqrt{27n}).
 		\label{eq:freqTS}
 	\end{equation}

	Note that neither bound is dependent on the size of the waveform catalog. However, a large waveform catalog can increase the uncertainty in the prior distribution, which impacts the distribution of the regret. For a detailed description on the role of the prior, see the analysis of \cite{infoTS}.
\end{prop}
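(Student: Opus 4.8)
The plan is to recognize this as a standard $d$-dimensional stochastic linear contextual bandit with $d=3$ (the dimension of the context vector $\mathbf{x}_{w_i,t}=[\xi_1,\xi_2,\xi_3]$) and costs bounded in $[0,1]$, and then to invoke the two complementary regret analyses of Thompson Sampling — the Bayesian analysis associated with \cite{infoTS} and the frequentist analysis of \cite{agrawal} — specialized to this dimension. The stated bounds are exactly the known rates after substituting $d=3$: the Bayesian rate $\mathcal{O}(d\sqrt{n\log n})$ becomes $\mathcal{O}(3\sqrt{n\log n})$, and the frequentist rate $\tilde{\mathcal{O}}(d^{3/2}\sqrt{n})$ becomes $\tilde{\mathcal{O}}(\sqrt{27\,n})$ since $3^{3/2}=\sqrt{27}$. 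The two ingredients common to both halves are the self-normalized confidence ellipsoid for the least-squares estimate $\hat{\boldsymbol{\theta}}=\mathbf{B}_t^{-1}\mathbf{f}$, which is valid precisely because $\eta_t$ is $1$-subgaussian, and the elliptical-potential (log-determinant) lemma $\sum_{t=1}^{n}\lVert\mathbf{x}_{w_t}\rVert_{\mathbf{B}_t^{-1}}^2\le 2d\log(1+n/d)$.

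For the Bayesian bound I would first exploit the defining property of TS: conditioned on the history $\mathcal{F}_{t-1}$, the sampled parameter $\tilde{\boldsymbol{\theta}}\sim\mathcal{N}(\hat{\boldsymbol{\theta}},\mathbf{B}_t^{-1})$ is drawn from the same posterior that governs the true $\boldsymbol{\theta}$, so the played waveform $w_t=\argmin_{w_i\in\mathcal{W}'}\langle\mathbf{x}_{w_i,t},\tilde{\boldsymbol{\theta}}\rangle$ and the optimal waveform $w_t^{*}=\argmin_{w_i\in\mathcal{W}'}\langle\mathbf{x}_{w_i,t},\boldsymbol{\theta}\rangle$ are identically distributed given $\mathcal{F}_{t-1}$. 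I would introduce an $\mathcal{F}_{t-1}$-measurable confidence-bound function $U_t(\cdot)$ built from $\hat{\boldsymbol{\theta}}$ and $\mathbf{B}_t^{-1}$ and decompose the conditional per-round regret as $(\langle\boldsymbol{\theta},\mathbf{x}_{w_t}\rangle-U_t(w_t))+(U_t(w_t)-U_t(w_t^{*}))+(U_t(w_t^{*})-\langle\boldsymbol{\theta},\mathbf{x}_{w_t^{*}}\rangle)$. Posterior matching forces the middle term to have zero conditional expectation, while the two outer terms are each controlled by $\beta\,\lVert\mathbf{x}_{w_t}\rVert_{\mathbf{B}_t^{-1}}$ with radius $\beta=\mathcal{O}(\sqrt{d\log n})$. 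Summing over the $n$ PRIs, bounding $\sum_t\lVert\mathbf{x}_{w_t}\rVert_{\mathbf{B}_t^{-1}}\le\sqrt{n\sum_t\lVert\mathbf{x}_{w_t}\rVert_{\mathbf{B}_t^{-1}}^2}=\mathcal{O}(\sqrt{dn\log n})$ via the elliptical-potential lemma, and using the refined prior-dependent constant tracking of \cite{infoTS} to collect the logarithmic factor, yields the stated $\mathcal{O}(d\sqrt{n\log n})$ rate. A point worth emphasizing is that the time-varying catalog $\mathcal{W}'$ is itself $\mathcal{F}_{t-1}$-measurable, since it depends only on $w_{t-1}$; restricting every $\argmin$ to $\mathcal{W}'$ therefore leaves both the posterior-matching identity and the elliptical-potential bound intact, which is why neither bound depends on $|\mathcal{W}|$ or $|\mathcal{W}'|$.

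For the frequentist bound the prior is unavailable, the posterior-matching cancellation fails, and the argument of \cite{agrawal} must be used instead. Here I would define two high-probability events: a concentration event on which the true $\boldsymbol{\theta}$ lies in the ellipsoid centred at $\hat{\boldsymbol{\theta}}$ with radius $\mathcal{O}(\sqrt{d\log n})$, and a sampling event on which $\tilde{\boldsymbol{\theta}}$ stays within $\mathcal{O}(\sqrt{d\log n})$ of $\hat{\boldsymbol{\theta}}$ in the $\mathbf{B}_t$-norm. The pivotal step is an anti-concentration argument showing that, because the posterior covariance is inflated by a factor of order $\sqrt{d}$, the sample $\tilde{\boldsymbol{\theta}}$ is \emph{optimistic} (its implied cost for the optimal arm is no larger than the truth) with at least constant probability each round; this extra $\sqrt{d}$ inflation is exactly what converts the LinUCB rate $d\sqrt{n}$ into $d^{3/2}\sqrt{n}$. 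I would then split the cumulative regret into a contribution from the optimistic rounds, bounded through a super-martingale that controls the deviation between the sampled value and the realized cost and summed again via the elliptical-potential lemma, plus complementary low-probability terms that contribute only lower-order corrections once the failure probabilities are driven below $1/n$.

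The main obstacle is the frequentist half, specifically the anti-concentration/optimism step together with the super-martingale bookkeeping: one must verify that inflating the Gaussian posterior by $\sqrt{d}$ simultaneously guarantees constant-probability optimism and keeps $\tilde{\boldsymbol{\theta}}$ concentrated, and then carefully control the residual deviation terms so that the failure events do not degrade the $\sqrt{n}$ rate (this is where the $\tilde{\mathcal{O}}$ absorbs the logarithmic slack). By contrast the Bayesian half is comparatively routine once the posterior-matching identity and the elliptical-potential lemma are in hand; the only bandit-specific subtlety there is confirming that the $\mathcal{F}_{t-1}$-measurability of $\mathcal{W}'$ preserves the cancellation, as noted above. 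Throughout, the boundedness $C(w_i,\mathbf{s}_t)\in[0,1]$ is used to cap each instantaneous regret by $1$, justifying the $\min(1,\cdot)$ truncation needed when applying the elliptical-potential lemma.
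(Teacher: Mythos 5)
Your proposal takes essentially the same route as the paper: both halves reduce the claim to Russo and Van Roy's Bayesian analysis and Agrawal and Goyal's frequentist analysis of linear Thompson Sampling, specialized to context dimension $d=3$ (whence $3^{3/2}=\sqrt{27}$), with the observation that the $\mathcal{F}_{t-1}$-measurable constrained catalog $\mathcal{W}'$ fits the time-varying action-set framework of those works. The paper simply cites the relevant results directly, whereas you additionally sketch their internal machinery (posterior matching, the elliptical-potential lemma, anti-concentration/optimism); that extra detail is consistent with, and somewhat more explicit than, the paper's argument.
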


\begin{proof}\renewcommand{\qedsymbol}{}
	See Appendix B.
\end{proof}

\subsection{Adversarial Linear Contextual Bandits and the EXP3 Algorithm}
In addition to the stochastic linear contextual bandit model, we also consider the \textit{adversarial} linear contextual bandit problem. This formulation removes the assumption that the costs for each context waveform pair are sampled from a stationary distribution, and makes the gentler assumption that costs are arbitrarily selected by an intelligent adversary \cite{Lattimore2020}. This is a more general setting than the stochastic bandit which relaxes the assumption that costs are drawn from a fixed model $\langle \boldsymbol{\theta}, \mathbf{x}_{w_{i},t} \rangle + \eta_{t}$. The cost structure in the adversarial linear contextual bandit model is expressed by
\begin{equation}
	C(w_{i}, \mathbf{s}_{t}) = \langle \boldsymbol{\theta}_{t}, \mathbf{x}_{w_{i},t} \rangle,
\end{equation}
where the parameter vector that the radar wishes to learn $\boldsymbol{\theta}_{t} \in \mathbb{R}^{d}$ is now a time-varying quantity. Thus, radar must learn a model which is nonstationary in general. The only source of randomness in the radar's regret is the distribution of the waveforms the radar transmits, which may reflect scenarios when the radar's waveform impacts the channel quality, such as in the presence of an intelligent jammer.

\begin{algorithm}[t]
	\setlength{\textfloatsep}{0pt}
	\caption{Constrained Linear Contextual EXP3 Waveform Selection}
	\SetAlgoLined
	Initialize learning rate $\varepsilon \in (0,1)$, exploration distribution $\pi$, tolerable distortion level $\hat{d}$, and exploration parameter $\gamma \in [0,1]$\\
	\For{t = 2,...,$n$}{
		\vspace{0.1cm}
		Sense spectrum and estimate interference vector $\mathbf{\hat{s}}_{t} = [s_{1},...s_{S}]$;\vspace{0.2cm}		
		
		Create constrained action space $\mathcal{W'} = \{ w_{i}  \in \mathcal{W}: D(w_{i}|w_{t-1}) < \hat{d} \}$; \vspace{0.2cm}
		
		Using $\mathbf{\hat{s}}_{t}$ and $\mathcal{F}_{t-1}$ assemble context vectors $\mathbf{x}_{w_{i},t} = [\xi_{1},...,\xi_{d}], \; \; \forall \; w_{i} \in \mathcal{W'}$;\vspace{0.2cm}
		
		For each $w_{i} \in \mathcal{W'}$ \vspace{.1cm} set $P_{t}(w_{i}) \leftarrow \gamma \pi(w_{i})+(1-\gamma) \frac{\exp \left(-\varepsilon \sum_{j=1}^{t-1} \hat{C}_{j}(w_{i}, \mathbf{s}_{j})\right)}{\sum_{w_{i}^{\prime} \in \mathcal{W'}} \exp \left(-\varepsilon \sum_{j=1}^{t-1} \hat{C}_{j}\left(w_{i}^{\prime}, \mathbf{s}_{j} \right)\right)}$; \vspace{0.2cm}
		
		Sample $w_{t} \sim P_{t}$ and observe $C(w_{t},\mathbf{s}_{t})$; \vspace{.2cm}
		
		Set $\boldsymbol{\hat{\theta}}_{t} \leftarrow \mathbf{Q}_{t}^{-1} \mathbf{x}_{w_{i},t} C_{t}$ and $\hat{C}_{t}(w_{i},\mathbf{s}_{t}) \leftarrow \langle \mathbf{x}_{w_{i},t}, \boldsymbol{\hat{\theta}}_{t} \rangle$;						
	}
	\label{algo:exp3algo}	
\end{algorithm}

To balance exploration and exploitation in this setting, the radar uses a constrained variant of the EXP3 algorithm, first introduced by Auer in \cite{Auer2002} and widely used in adversarial bandit problems thereafter. EXP3 uses \textit{exponentially weighted estimation} to approximate the expected cost of each waveform before transmission. The exponentially weighted estimator $P_{t}: \mathcal{W'} \mapsto [0,1]$ is given by the probability mass function
\begin{equation}
	\tilde{P}_{t}(w_{i}) \propto \exp \left(\varepsilon \sum_{j=1}^{t-1} \hat{C}_{j}(w_{i},\mathbf{s}_{j})\right)
	\label{eq:expweight}
\end{equation}
where $\varepsilon \in (0,1)$ is the learning rate and $\hat{C}_{j}$ is an estimate of the cost at PRI $j$. To control the variance of the cost estimates, the probability mass in (\ref{eq:expweight}) is mixed with an arbitrary exploration distribution $\pi: \mathcal{W} \mapsto [0,1]$ where $\sum_{w_{i} \in \mathcal{W'}} \pi(w_{i}) = 1$. In the constrained implementation, $\pi$ is set to be uniform over $\mathcal{W'}$ each PRI. The mixture distribution from which waveforms are selected is then given by
\begin{equation}
	P_{t}(w_{i})=(1-\gamma) \tilde{P}_{t}(w_{i})+\gamma \pi(w_{i}),
\end{equation}
where $\gamma \in [0,1]$ is a mixing factor. Each PRI, the waveform is then sampled $w_{t} \sim P_{t}$. However, calculation of $P_{t}$ involves an estimation of the cost for each $w_{i} \in \mathcal{W'}$ at each time step. This is performed via least-squares. An estimate of the model is obtained using 
\begin{equation}
	\boldsymbol{\hat{\theta}}_{t} = \mathbf{Q}_{t}^{-1} \mathbf{x}_{w_{i},t} C_{t},
	\label{eq:lsq} 
\end{equation}
where 
\begin{equation}
\mathbf{Q}_{t} = \sum_{w_{i} \in \mathcal{W'}} P_{t}(w_{i}) \mathbf{x}_{w_{i},t} \mathbf{x}_{w_{i},t}^{T}, 
\end{equation}
is a nonsingular matrix in $\mathbb{R}^{d \times d}$. This can be ensured by selecting the exploration distribution such that the matrix
\begin{equation}
Q(\pi) = \sum_{w_{i} \in \mathcal{W'}} \pi(w_{i}) \mathbf{x}_{w_{i},t} \mathbf{x}_{w_{i},t}^{T}, 
\end{equation}
is nonsingular. Once (\ref{eq:lsq}) is computed, the cost estimate is easily accessed through the inner product $\langle \mathbf{x}_{w_{i},t}, \boldsymbol{\hat{\theta}}_{t} \rangle$. 

EXP3 is nearly optimal in terms of worst-case regret, but the distribution of costs has a high variance \cite{Lattimore2020}. However, through the bias-variance trade-off, this general model allows a CR to maintain acceptable performance in a wide range of environments. A description of the constrained EXP3 waveform selection approach used in this work can be seen in Algorithm \ref{algo:exp3algo}.

The computational complexity of the proposed EXP3 algorithm is $\mathcal{O}(Wd+d^{3})$ per round \cite{Lattimore2020}, where $W$ is size of the waveform catalog and $d$ is the dimensionality of the context vector. For most waveform selection problems, the dimensionality can be kept reasonably low (our examples use $d=3$), and should not be a practical issue. However, problems may arise when the number of waveforms $W$ is large. Solving adversarial bandit problems with many arms is difficult in general, and is the subject of much current investigation in the online learning theory literature.

\begin{prop}
	The worst-case regret of the constrained EXP3 algorithm for $W = |\mathcal{W}|$ is given by
	\begin{equation}
		\texttt{Regret}(n) = \mathcal{O}(2\sqrt{9 n \log{W}}),
		\label{eq:exp3bound}
	\end{equation}
    where, unlike TS, the bound is sublinearly dependent on the size of the waveform catalog.
\end{prop}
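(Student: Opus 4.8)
The plan is to follow the standard analysis of the exponential-weights method for adversarial linear bandits (as in \cite{Auer2002,Lattimore2020}), adapted to the time-varying constrained catalog $\mathcal{W}'$, where the regret is understood in expectation over the algorithm's internal sampling $w_t \sim P_t$ against an oblivious adversary. The first step is to verify that the least-squares quantity $\boldsymbol{\hat\theta}_t = \mathbf{Q}_t^{-1}\mathbf{x}_{w_t,t}C_t$ yields a conditionally unbiased cost estimate. Since $w_t \sim P_t$ and $\mathbf{Q}_t = \sum_{w_i \in \mathcal{W}'}P_t(w_i)\mathbf{x}_{w_i,t}\mathbf{x}_{w_i,t}^T$, one has $\E[\mathbf{x}_{w_t,t}\mathbf{x}_{w_t,t}^T \mid \mathcal{F}_{t-1}] = \mathbf{Q}_t$, so that $\E[\boldsymbol{\hat\theta}_t \mid \mathcal{F}_{t-1}] = \boldsymbol{\theta}_t$ and hence $\E[\hat{C}_t(w_i,\mathbf{s}_t)\mid\mathcal{F}_{t-1}] = \langle \mathbf{x}_{w_i,t},\boldsymbol{\theta}_t\rangle = C(w_i,\mathbf{s}_t)$. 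Nonsingularity of $\mathbf{Q}_t$ --- guaranteed by choosing $\pi$ so that $Q(\pi)$ is invertible --- is what makes this estimator well defined each PRI.

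Second, I would invoke the potential-function (mirror-descent) bound for the exponential weights $\tilde{P}_t \propto \exp(-\varepsilon\sum_{j<t}\hat{C}_j)$. Writing the instantaneous regret against any fixed comparator $w^* \in \mathcal{W}'$ and telescoping the log-partition function, the weighted surrogate regret splits into a \emph{bias} term of order $\frac{\log W}{\varepsilon}$, arising from the entropy of the initial uniform weights over the catalog, and a \emph{stability} term of order $\varepsilon\sum_{t}\sum_{w_i\in\mathcal{W}'}P_t(w_i)\hat{C}_t(w_i)^2$. Because $|\mathcal{W}'|\le|\mathcal{W}|=W$ at every PRI, the initial-entropy term is bounded uniformly by $\log W$, which is precisely the source of the sublinear $\sqrt{\log W}$ catalog dependence; the mixing with $\gamma\pi$ contributes only a lower-order additive exploration penalty that I would absorb by taking $\gamma$ small.

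The crux is bounding the stability term. Using the least-squares form, $\sum_{w_i}P_t(w_i)\hat{C}_t(w_i)^2 = \boldsymbol{\hat\theta}_t^T \mathbf{Q}_t \boldsymbol{\hat\theta}_t = C_t^2\,\mathbf{x}_{w_t,t}^T\mathbf{Q}_t^{-1}\mathbf{x}_{w_t,t}$, and since $C_t\in[0,1]$, taking the conditional expectation gives
\begin{equation}
\E\Big[\textstyle\sum_{w_i}P_t(w_i)\hat{C}_t(w_i)^2 \,\Big|\, \mathcal{F}_{t-1}\Big] \le \sum_{w_i}P_t(w_i)\,\mathbf{x}_{w_i,t}^T\mathbf{Q}_t^{-1}\mathbf{x}_{w_i,t} = \Tr(\mathbf{Q}_t^{-1}\mathbf{Q}_t) = d.
\end{equation}
Summing over $n$ rounds bounds the stability contribution by $\varepsilon d n$, so that $\E[\texttt{Regret}(n)] \lesssim \frac{\log W}{\varepsilon} + \varepsilon d n$. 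Balancing with $\varepsilon = \sqrt{\log W/(dn)}$ yields $\E[\texttt{Regret}(n)] = \mathcal{O}(\sqrt{dn\log W})$, which matches the stated bound (\ref{eq:exp3bound}) once the context dimension and the explicit mixing and cost-range constants are carried through, with the catalog size entering only through $\log W$.

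The main obstacle I anticipate is controlling the variance of the importance-weighted least-squares estimates uniformly over the time-varying action set: the identity $\Tr(\mathbf{Q}_t^{-1}\mathbf{Q}_t)=d$ relies on $\mathbf{Q}_t$ being nonsingular at \emph{every} PRI, yet $\mathcal{W}'$ --- and hence the span of the available context vectors --- changes with the distortion constraint $D(w_i,w_{t-1})<\hat d$. I would handle this by re-selecting the exploration distribution $\pi$ to be uniform on the current $\mathcal{W}'$ (as in Algorithm~\ref{algo:exp3algo}) and requiring that $Q(\pi)$ remain full rank on each constrained set, so that the per-round $d$ bound, and therefore the overall rate, are preserved regardless of how the catalog is constrained.
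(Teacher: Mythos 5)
Your proposal is correct and follows essentially the same route as the paper's proof: the exponential-weights regret decomposition into an entropy term $\tfrac{\log W}{\varepsilon}$ plus a stability term, the per-round bound $\E[\sum_{w_i} P_t(w_i)\hat{C}_t(w_i)^2\mid\mathcal{F}_{t-1}]\le \Tr(\mathbf{Q}_t^{-1}\mathbf{Q}_t)=d$, and tuning $\varepsilon$ to obtain $\mathcal{O}(\sqrt{dn\log W})$ with $d=3$ — you simply unpack the steps the paper delegates to Theorems 11.1 and 27.1 of \cite{Lattimore2020}. Your choice $\varepsilon=\sqrt{\log W/(dn)}$ is in fact the correct balancing (the paper's stated $\varepsilon=\log(W)/(3dn)$ omits the square root), and your closing remark about keeping $Q(\pi)$ full rank on each constrained set $\mathcal{W}'$ addresses a point the paper leaves implicit.
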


\begin{proof}\renewcommand{\qedsymbol}{}
	See Appendix C.
\end{proof}

\section{Considerations for Single Target Tracking}
\label{se:track}
Now that the waveform selection process has been described, and an online learning framework to select optimal waveforms has been presented, we discuss how the proposed approach can be applied to a target tracking system. In legacy radars, the tracking system or \emph{processor}, has been considered separately from the selection of sensor parameters. However, recent works have attempted to unify optimization of the sensor and processor sub-systems for improved system-level performance \cite{framework}. Since the online learning approach presented here considers side information, and the belief of the algorithm can be easily updated in the presence of new data, information from the tracking system can be used to guide the sensor optimization problem and vice versa.

Let the target's position be given by the state vector $\nbx_{k} \in \nbbR^{n_{x}}$, where $n_{x}$ is the dimension of target information considered and $k \in \nbbN$ is the tracker's time index, which consists of many radar PRIs. The target's state is assumed to evolve according to a Markov process, meaning $\P(\nbx_{k}|\nbX_{k-1}) = \P(\nbx_{k}|\nbx_{k-1})$, where $\nbX_{k-1} \triangleq \{\nbx_{i}: i = 1,...,k-1 \}$. This can be expressed using the stochastic model 
\begin{equation}
	\nbx_{k} = \nbf_{k-1}(\nbx_{k-1},\nbv_{k-1}),
\end{equation}
where $\nbf_{k-1}$ is a known function and $\nbv_{k-1}$ is the process noise with a known density. The goal of the tracker is to estimate $\nbx_{k}$ using a sequence of measurements $\nbZ_{k} \triangleq \{\nbz_{i}: i = 1,...,k\}$. These measurements are derived from the target state via the model
\begin{equation}
	\nbz_{k} = \nbh_{k}(\nbx_{k},\nbv_{k}),
\end{equation}
\begin{table*}[]
	\centering
	\caption{\textsc{Simulation Parameters Considered}}
	\label{tab:sims}
	\begin{tabular}{|l|l|l|l|}
		\hline
		Parameter & Value & Parameter & Value \\ \hline
		$N_{\texttt{BS}}$ (coexistence scenario) & 90 & $\texttt{JNR}$ (jamming scenario) & $20$ $\texttt{dB}$ \\ \hline
		Intf. Bandwidth & $20 \texttt{MHz}$ & Shared channel bandwidth  & $100$ \texttt{MHz} \\ \hline
		$P_{j}$ (coexistence scenario) & $40-46.5 \texttt{dBm}$ & CPI length $M_{\texttt{CPI}}$  & $400$ pulses  \\ \hline
		$\mathbf{d}_{j}$ (coexistence) & $5$-$6$ Km & Path loss $\psi$ (coexistence) & 3.5  \\ \hline
		Distortion constraint $\hat{d}$ & $0.2$ & Pulse Repetition Interval  & $409.6 \mu s$ \\ \hline
	\end{tabular}
\end{table*}
where $\nbh_{k}$ is a known function and $\nbv_{k}$ is the measurement noise with known density. To estimate the probability density $p(\nbx_{k}|\nbZ_{k})$, a recursive sequence of \emph{prediction} and \emph{updating} can be used. \cite{Ristic2004}. The prediction step involves applying the well-known Chapman-Kolmogorov equation
\begin{equation}
	p(\nbx_{k}|\nbZ_{k-1}) = \int p(\nbx_{k}|\nbx_{k-1})p(\nbx_{k-1}|\nbZ_{k-1}) d\nbx_{k-1}.
	\label{eq:prediction}
\end{equation}

Once the next measurement $\nbz_{k}$ is received, the posterior density $p(\nbx_{k}|\nbZ_{k})$ can be found by applying Bayes' rule
\begin{align}
	p(\nbx_{k}|\nbZ_{k}) &= p(\nbx_{k}|\nbz_{k}, \nbZ_{k-1}) \\
	&= \frac{p(\nbz_{k}|\nbx_{k},\nbZ_{k-1})p(\nbx_{k}|\nbZ_{k-1})}{p(\nbz_{k}|\nbZ_{k-1})}	\\
	&= \frac{p(\nbz_{k}|\nbx_{k})p(\nbx_{k}|\nbZ_{k-1})}{p(\nbz_{k}|\nbZ_{k-1})},
	\label{eq:update}
\end{align}
where the normalizing density $p(\nbz_{k}|\nbZ_{k-1})$ is defined by $\nbh_{k}$ and $\nbv_{k}$. In general, this recursive procedure cannot be explicitly performed since an entire density must be stored, which is equivalent to an infinite vector \cite{Ristic2004}. Thus, for tractability purposes we assume the posterior $p(\nbx_{k}|\nbZ_{k})$ is normally distributed, that the noise processes $\nbv_{k-1}$ and $\nbv_{k}$ are normally distributed, and that the models $\nbf_{k-1}$ and $\nbh_{k}$ are linear. This structure implies that the Kalman filter is optimal, and can be used for the tracking procedure here. However, the general structure which follows can be applied to any particle filtering algorithm using the recursion of (\ref{eq:prediction}) and (\ref{eq:update}).

The adaptive waveform selection process effects the observation noise covariance matrix via the pulse time $T_{k}$ and sweep rate $\alpha_{k}$ at tracking interval $k$ as follows:
\begin{multline}
	N(T_{k+1},\alpha_{k+1}) = \\ \begin{bmatrix}
		c^2 T_{k+1}^{2}/(2\eta) & -c^{2} \alpha_{k+1} T_{k+1}^{2}/(f_{c}\eta)\\
		-c^{2} \alpha_{k+1} T_{k+1}^{2}/(f_{c}\eta) & \frac{c^{2}}{f_{c}\eta}(\frac{1}{2 T_{k+1}}+2 \alpha_{k+1}^{2} T_{k+1}^{2}),
	\end{bmatrix}
\end{multline}
where $\eta$ is the signal-to-noise ratio at the radar, $c$ is the speed of light, and $f_{c}$ is the carrier frequency. It can then be shown (using second derivatives as in \cite{Kershaw1994}), that the optimal parameters are
\begin{align}
	\alpha^{*}_{k+1} &= \frac{-w_{c}p_{12}}{2p_{11}}\\
	T^{*}_{k+1} &= \left( \frac{p_{11}^{2}}{\omega_{c}^{2}(p_{11}p_{22}-p_{12}^{2})} \right)^{1/4},
\end{align}
where $p_{ij}$ is the $ij^{th}$ element of the smoothed tracking error covariance matrix. Thus, the track-optimal pulse length $\alpha^{*}$ can be used directly in the transmit waveform written in (\ref{eq:tx}). Implementation of the optimal sweep rate requires slightly more care, as it impacts the radar's ability to mitigate interference. One option is to use $b^{*}_{k+1}$ directly in (\ref{eq:tx}), reducing the size of the waveform catalog. Another option is to introduce a penalty to the cost function, where large deviations from $b^{*}_{k+1}$ are penalized.

\begin{figure*}[t]
	\centering
	\begin{subfigure}[]
		\centering
		\includegraphics[scale=0.35]{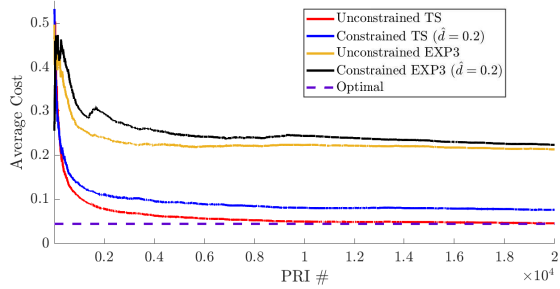}
	\end{subfigure}
	\begin{subfigure}[]
		\centering
		\includegraphics[scale=0.35]{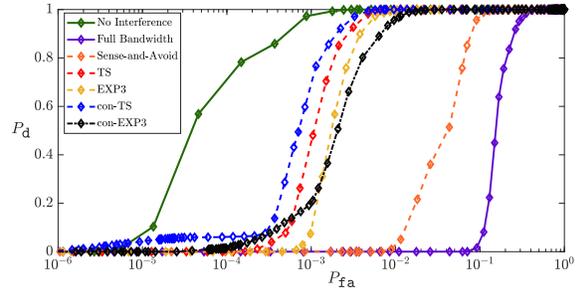}
	\end{subfigure}
	\caption{(a) \textsc{Average cost} of online learning algorithms over $n=20,000$ PRIs in the radar-communications coexistence scenario. (b) \textsc{Average detection characteristics} of each online learning algorithm compared to a static radar which occupies the entire shared channel. Detection characteristics for the online learning algorithms include the time before convergence to a stationary solution.}
	\label{fig:tc7Detect}
\end{figure*}

\section{Simulation Study}
\label{se:sim}
In this section, the proposed constrained online learning framework is evaluated in a radar-communications coexistence setting as well as in the presence of an adaptive jammer. In the former setting, the CR is the secondary user of a shared spectrum channel. The CR wishes to maximize its own detection performance by selecting waveforms which mitigate both interference and distortion effects in the processed data, while utilizing sufficient available bandwidth and causing minimal harmful interference to other systems. The CR shares the channel with $N_{\texttt{BS}}$ cellular base stations (BSs), which are spatially distant from the radar. In the latter setting, the CR also wishes to maximize detection performance while mitigating distortion effects, but a single frequency-agile jammer is capable of tracking the radar's transmitted waveforms. Each setting is further described below and associated simulation parameters are provided in Table \ref{tab:sims}.

\subsection{Radar-Communications Coexistence Scenario}
Each PRI, the radar selects a waveform $w_{i} \in \mathcal{W'}$ and observes $\mathcal{C}(w_{i},\mathbf{s}_{t})$. Once $M_{\texttt{CPI}}$ pulses are received, windowing, matched filtering and a 2D FFT are performed to create a range-Doppler map. The considered CPI's are non-overlapping. Detection analysis is performed once each CPI by applying a threshold selected by the 2D cell-averaging Constant False-Alarm Rate (CFAR) algorithm to evaluate the target detection properties of each CR scheme and traditional fixed band radar operation. Since the BSs are located far from the radar, small scale fading effects are assumed to be absent and the interference channel is dominated by correlated shadowing. The aggregate interference in each sub-channel due to the BSs at the radar is given by
\begin{equation}
	\mathcal{I}_{\texttt{agg}} =  \textstyle \sum_{j = 1}^{N_{\texttt{act}}} P_{j} \mathcal{G}_{r} \left\lVert \mathbf{d}_{j} \right\rVert^{- \psi} \exp({X_{j}}),      
\end{equation}
where $N_{\texttt{act}}$ is the number of active BSs, $P_{j}$ is the transmission power of BS $j$, $\mathcal{G}_{r}$ is the radar recieve antenna gain, $\mathbf{d}_{j}$ is the distance from BS $j$ to the radar, $\psi$ is the path loss exponent, and $X_{j} \sim N(\hat{\mu}_{j}, \sigma^{2}_{j})$ is the data transmitted by BS $j$. The cellular network bandwidth is $20 \texttt{MHz}$ and BSs transmit between $40$ and $46.5 \texttt{dBm}$. The BSs are randomly distributed between $5$ and $6 \texttt{Km}$ from the radar. The path loss term is $\psi = 3.5$. The CPI length $M_{\texttt{CPI}}$ is $400$ pulses. Each PRI, actions with $D(w_{t}, w_{t-1}) > (\hat{d} = 0.2)$ are eliminated.

\begin{figure}
	\centering
	\includegraphics[scale=0.45]{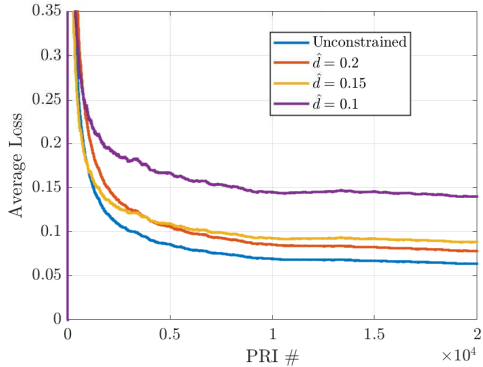}
	\caption{\textsc{Impact of maximum tolerable distortion level} $\hat{d}$ on the average cost incurred by the Thompson Sampling algorithm.}
	\label{fig:dLevel}
	\vspace{-.4cm}
\end{figure}

\begin{figure*}
	\centering
	\begin{subfigure}[]
		\centering
		\includegraphics[scale=0.45]{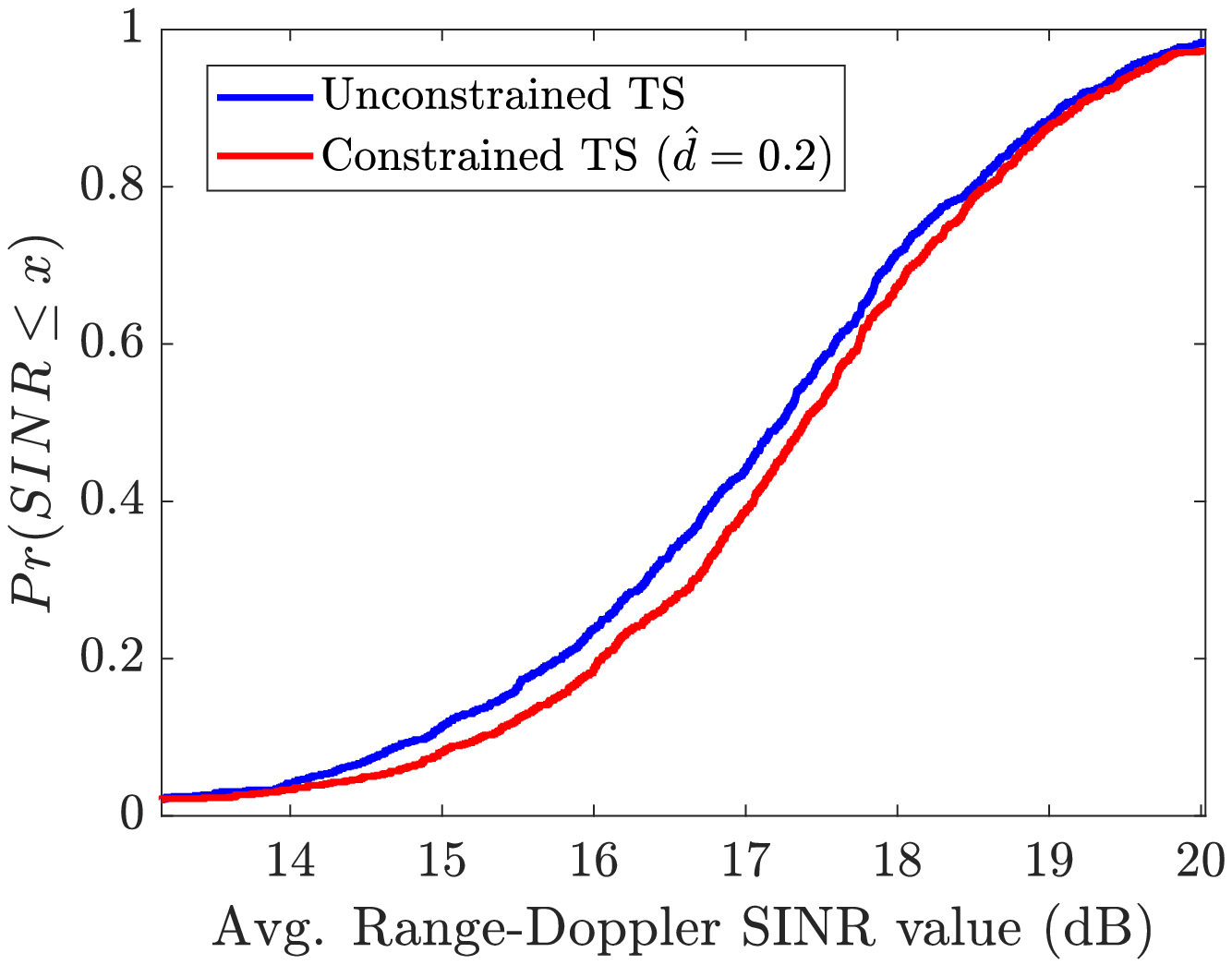}
	\end{subfigure}
	\begin{subfigure}[]
		\centering
		\includegraphics[scale=0.45]{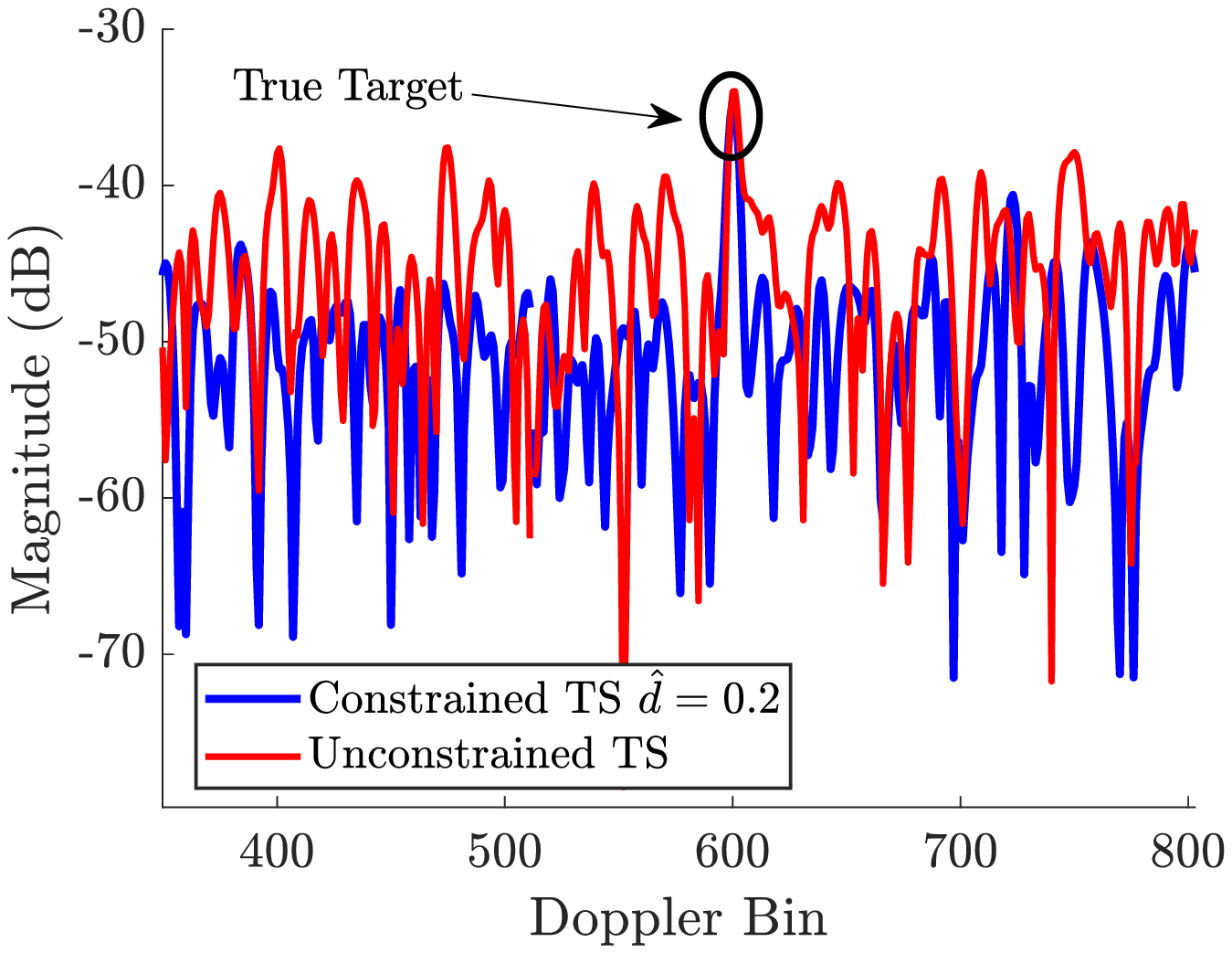}
	\end{subfigure}
	\caption{(a) \textsc{Empirical cumulative distribution function} of the SINR experienced by the constrained TS algorithm with tolerable distortion level $\hat{d}=0.2$ and unconstrained TS algorithm. (b) \textsc{Doppler profile} of the constrained and unconstrained TS algorithms at the true target range bin.}
	\label{fig:ecdf}
\end{figure*}

\begin{figure*}
	\centering
	\includegraphics[scale=0.35]{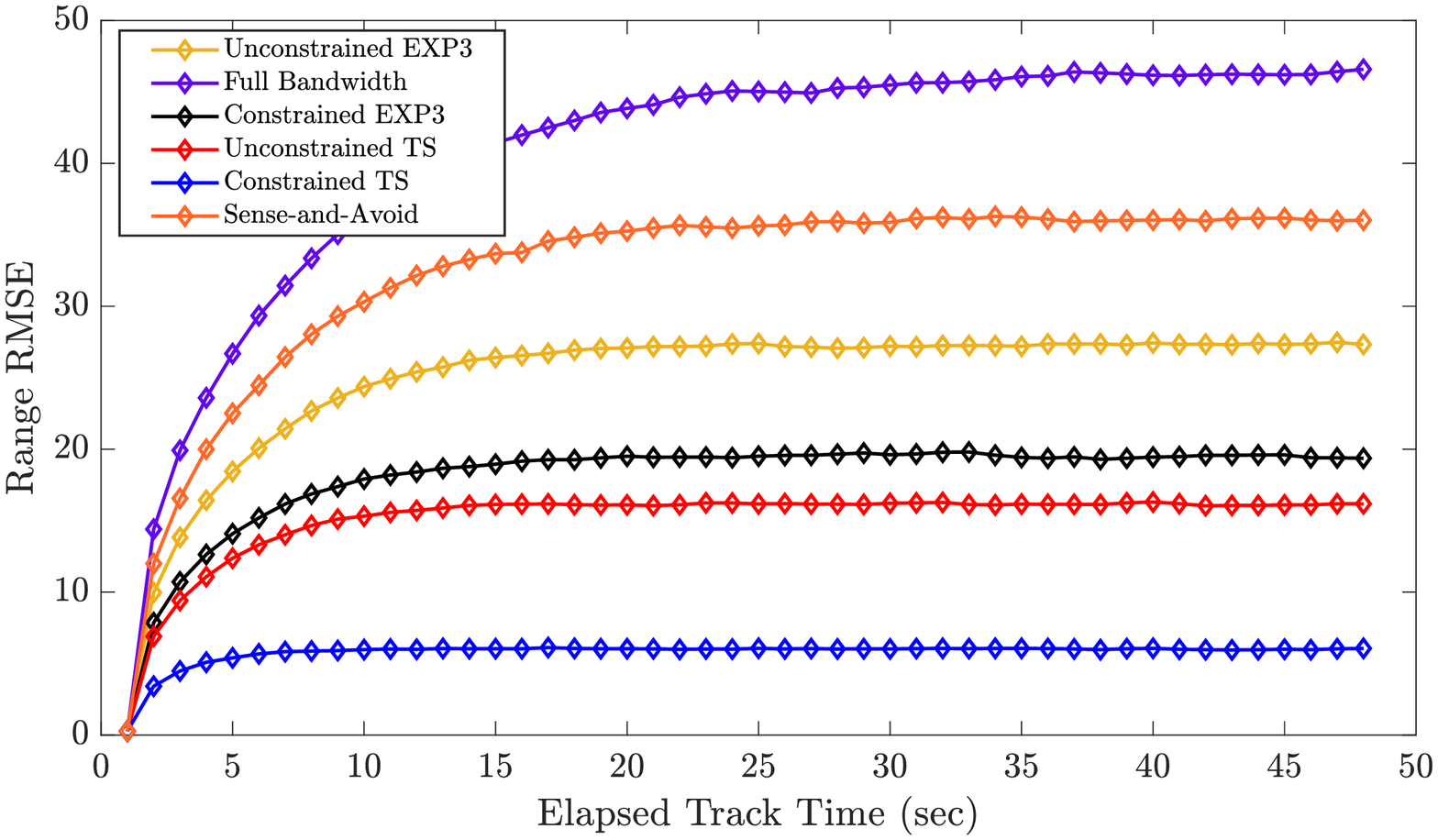}
	\caption{\textsc{Kalman filtered tracking RMSE} of each learning algorithm in the coexistence setting.}
	\label{fig:coexistTrack}	
\end{figure*}

In Figure \ref{fig:tc7Detect}(a), the average cost of each online learning algorithm is evaluated over 20,000 PRIs. In this scenario, the assumption of a stationary stochastic environment holds well, and both the unconstrained and constrained variants of the TS approach outperform EXP3. Additionally, we observe that utilizing a constraint of $\hat{d}=0.2$ results in a slightly worse long-term average cost for both the TS and EXP3 algorithms, indicating that utilizing the time-varying waveform catalog does inhibit the radar's ability to avoid interference. A more detailed examination of how $\hat{d}$ effects the radar's ability to avoid interference can be seen in Figure \ref{fig:dLevel}. Once actions which are critical to either exploration or exploitation are removed, the radar performs significantly worse, as observed for the case of $\hat{d} = 0.1$ in Figure \ref{fig:dLevel}.

\begin{figure*}[t]
	\centering
	\begin{subfigure}[]
		\centering
		\includegraphics[scale=0.35]{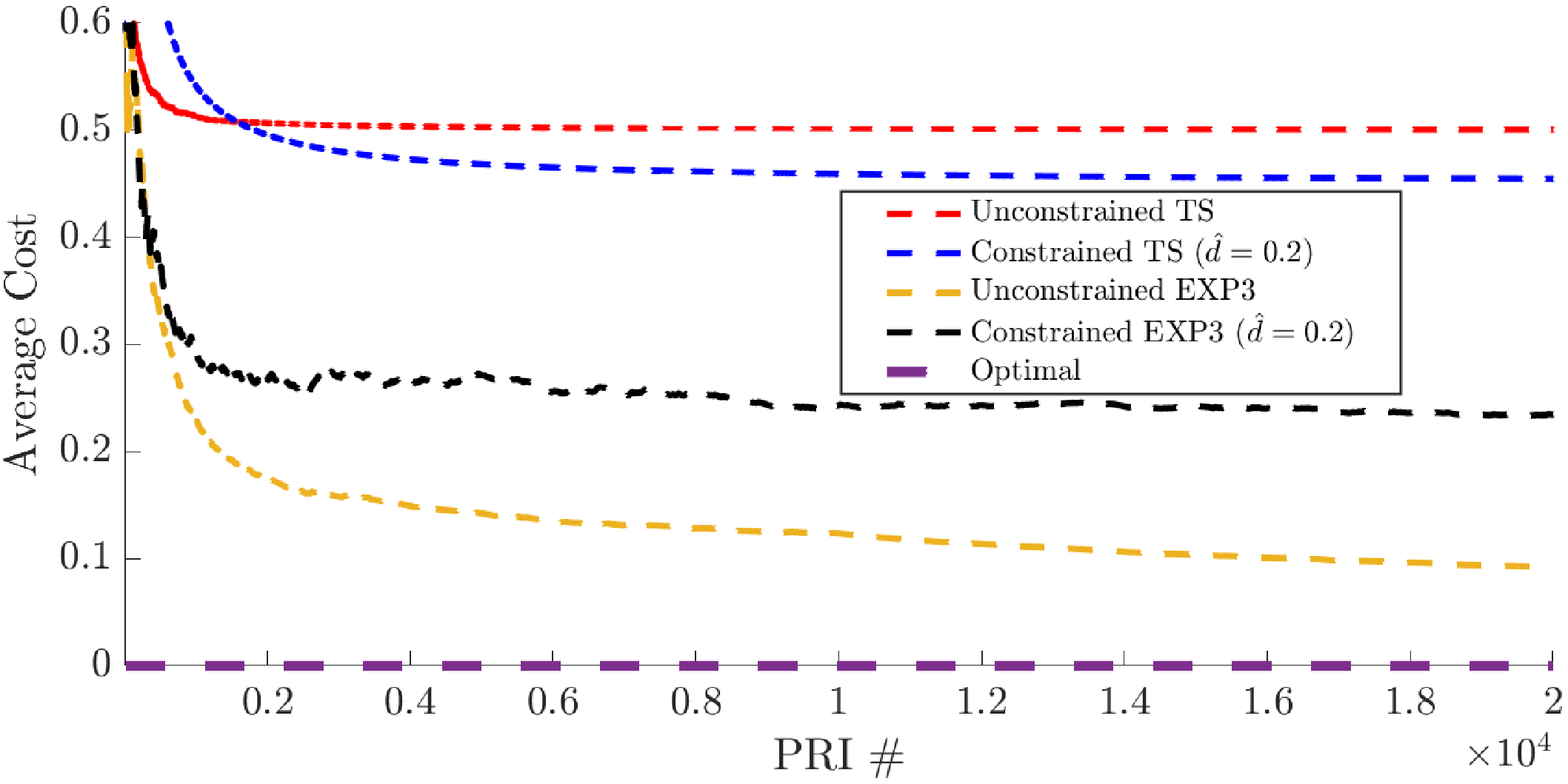}
	\end{subfigure}
	\begin{subfigure}
		\centering
		\includegraphics[scale=0.35]{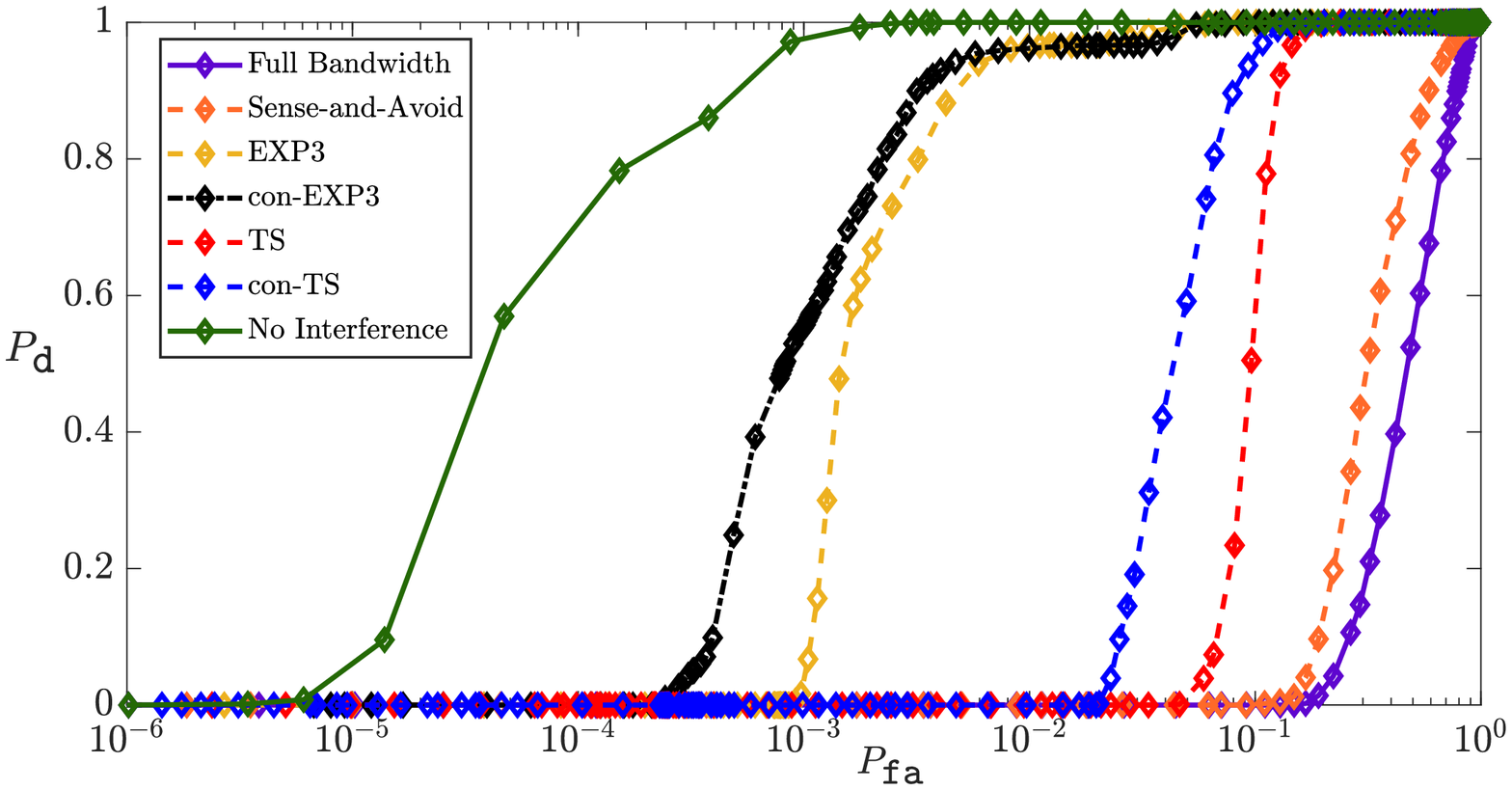}
	\end{subfigure}
	\caption{(a) \textsc{Average cost} of online learning algorithms over $n=20,000$ PRIs in the intentional adaptive jamming scenario. (b) \textsc{Average detection characteristics} of each online learning algorithm compared to a static radar which occupies the entire shared channel.}
	\label{fig:jamm}
\end{figure*}

In Figure \ref{fig:tc7Detect}(b), the detection performance of each online learning algorithm is evaluated and compared to the case of a statically allocated radar occupying the entire 100$\texttt{MHz}$ shared channel. In contrast to the interference avoidance performance, we observe that the constrained variants of each algorithm substantially outperform their unconstrained counterparts in terms of detection characteristics. This is due to the reduced sidelobe levels in the resulting range-Doppler images due to the constrained waveform set. By accepting a slightly higher rate of collision with interference, the radar is able to adapt its waveform less drastically, improving the quality of the range-Doppler image. Further, it is noted that each of the online learning algorithms provides an improvement in detection over a naive sense-and-avoid scheme, which selects the largest contiguous bandwidth not occupied by the previous interference state $\mathbf{s}_{t-1}$. The reasons for this improvement are twofold. First, the contextual bandit algorithms are able to identify the underlying statistical behavior of the channel while the sense-and-avoid scheme acts only according to the previous interference state, resulting in favorable interference avoidance while exhibiting less waveform agility. Further, the contextual bandit algorithms may choose to entirely avoid subsets of the shared channel where interference occurs often, while the sense-and-avoid scheme seeks to utilize the largest available bandwidth in all scenarios. 

In Figure \ref{fig:ecdf}(a), we see the cumulative distribution function of the average range-Doppler image SINR for the constrained and unconstrained TS algorithms. Even though the constrained variant was seen to perform slightly worse than its unconstrained counterpart in terms of interference avoidance, the average SINR across the range-Doppler map is sightly improved due to decreased spreading of energy across the Doppler domain as a result of pulse-agility. This is confirmed by Figure \ref{fig:ecdf}(b), which shows the Doppler profile of the constrained and unconstrained TS algorithms respectively at the true target bin. While the target peak is visible in both cases, the Doppler sidelobes are much higher when utilizing the unconstrained adaptive approach, explaining the disparity in detection performance.

In Figure \ref{fig:coexistTrack}, the tracking performance of each approach is seen when using a Kalman filter and the optimal bandwidth selection described in the previous section. As expected, the constrained Thompson sampling approach results in the lowest cumulative RMSE due to the combination of interference avoidance and the pronouced target peak as a result of the distortion constraint. Despite the modest gain in average SINR, the constrained TS algorithm performs much better than its unconstrained counterpart due to the decreased number of false targets that appear due to waveform agility. Further, each learning algorithm results in a marked improvement over the conventional fixed bandwidth and naive reactive radars, demonstrating the impact of online learning even when the interference avoidance performance is suboptimal, as is the case for the EXP3 algorithms in this setting. 

\begin{figure}[t]
	\centering
	\includegraphics[scale=0.45]{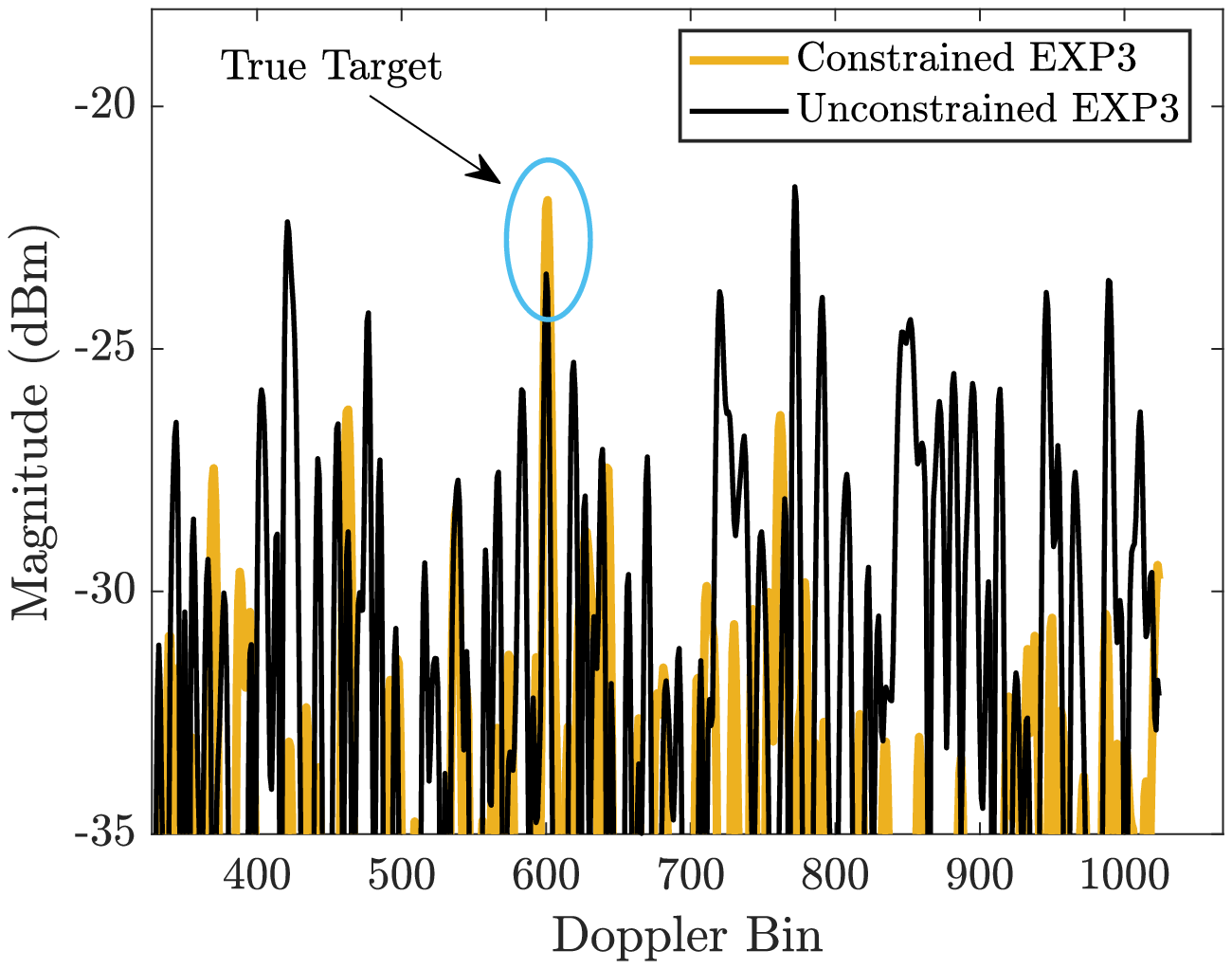}
	\caption{\textsc{Doppler profile} of unconstrained EXP3 algorithm and constrained EXP3 with distortion constraint $\hat{d}=0.2$. When the distortion constraint is implemented, the true target peak is clearly visible. Without the distortion constraint, energy is spread across the Doppler domain and the sidelobe levels mask the target peak.}
	\label{fig:jamDop}
\end{figure}

\subsection{Intentional Adaptive Jamming Scenario}
\begin{figure*}
	\centering
	\includegraphics[scale=0.35]{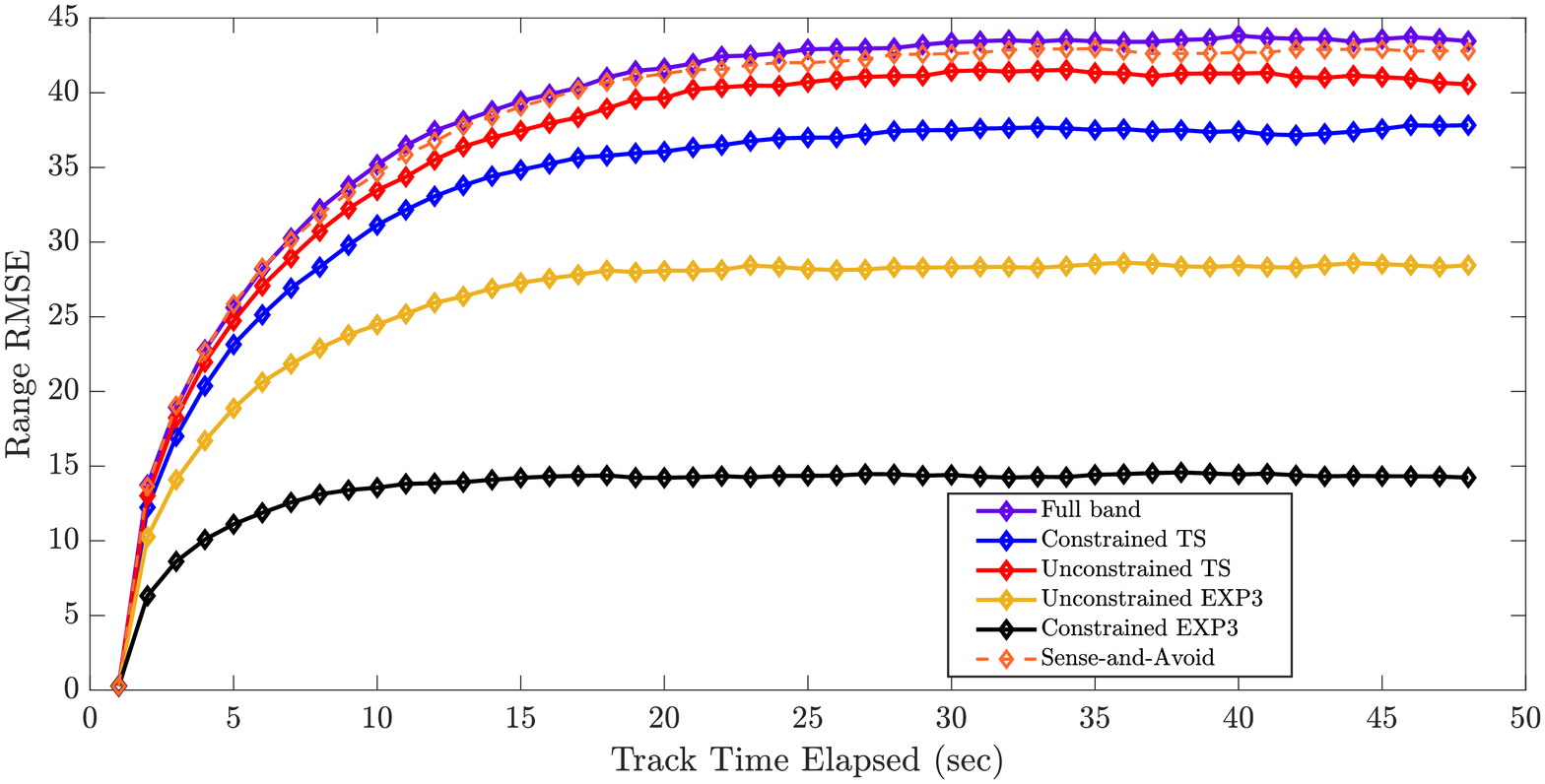}
	\caption{\textsc{Tracking RMSE} of each approach in the adaptive jamming scenario.}
	\label{fig:jamRMSE}
\end{figure*}

In this environment, the radar must avoid interference from an adaptive jammer. If the radar selects the same waveform for two consecutive PRIs, $w_{t} = w_{t-1}$, then during the next PRI a high-powered interfering signal occupies the bandwidth utilized by $w_{t}$. If the radar adapts its waveform, $w_{t} \neq w_{t-1}$, the jammer occupies the bandwidth it utilized in the previous PRI. This is a more challenging scenario for the CR than the coexistence environment as the radar's actions influence the behavior of the interference and the jammer may adapt every PRI. In the following simulation, we consider a constant jammer-to-noise-ratio (JNR) of $20 \texttt{dB}$ at the radar when interference occurs. As in the coexistence setting, actions with $D(w_{t},w_{t-1}) > (\hat{d} = 0.2)$ are eliminated.

In Figure \ref{fig:jamm}(a), the average cost of each learning algorithm is seen over $n = 20,000$ PRIs. Due to the nonstationary nature of the interference process, both the constrained and unconstrained TS algorithms perform substantially worse in terms of average cost in this setting relative to the coexistence scenario, which is expected given that the stochastic bandit model assumes a stationary mapping between contexts and costs exists. It is observed that both the constrained and unconstrained variants of EXP3 perform substantially better than TS, driving the long term average cost below $0.1$ in the unconstrained case. Although implementing the distortion constraint of $\hat{d} = 0.2$ results in worse interference avoidance relative to unconstrained EXP3, we see from Figure \ref{fig:jamm}(b) that the detection performance of both EXP3 and TS is substantially improved by the distortion constraint in this setting. To see this, a typical Doppler profile for this scenario is shown in Figure \ref{fig:jamDop}. The Doppler sidelobe levels are seen to be much lower when a distortion constraint of $\hat{d} = 0.2$ is implemented, and a higher peak is seen at the location of the true target.

In Figure \ref{fig:jamRMSE}, the performance of the radar's tracker is seen over 20 seconds. As expected from the detection characteristics, the constrained EXP3 algorithm provides superior performance in terms of range tracking. This is presumably due to the more pronounced target peak than in the case of the unconstrained algorithm, which results in a more accurate estimation of range and Doppler values and reduces the variance of noisy measurements fed to the Kalman tracker.

\section{Conclusion and Discussion}
\label{se:concl}
The cognitive radar waveform selection process was formulated as a linear contextual bandit problem with a time-varying constrained action set. An online learning framework for selecting optimal waveforms in both stochastic and more general adversarial environments was developed, resulting in a computationally feasible structure and performance guarantees. To mitigate Doppler sidelobes associated with pulse-to-pulse waveform adaptation, we introduced a time-varying constraint on the radar's waveform catalog. We then discussed how the proposed waveform selection process could interact with a tracking sub-system to improve overall performance. Finally, the proposed scheme was numerically evaluated in a radar-cellular coexistence scenario as well as in the presence of intentional adaptive jamming. Simulation results demonstrate that the proposed online learning algorithms provide favorable performance in terms of interference mitigation, target detection, and tracking error minimization compared to traditional fixed bandwidth radar and a naive adaptive scheme.

While this investigation was limited to the class of LFM waveforms, the approach is applicable to broader waveform classes. Future work could focus on utilizing a similar online learning approach to select optimal codes for phase-coded waveforms, and incorporate the ambiguity function into the waveform selection process. Additionally, we have focused on simple point targets in interference-limited scenarios. Future work could examine a broader range of physical environments where the characteristics may drift over time.

Other work in this domain could focus on developing further connections between online learning algorithms and particle filtering, which could be used to directly optimize the tracker. Another possible extension could explore a distributed sensing application as a multi-agent system, where individual sensor-processor agents could cooperate or compete to improve performance or facilitate interoperability. Given the necessity of intelligent and adaptive sensing for emerging applications, it is likely that online learning algorithms will continue to play an important role in the development of future radars.

\appendix
\subsection{Proof of Proposition 1}
Since $\mathcal{C}(w_{i},\mathbf{s}_{t})$ is a linear combination of two terms, $\beta_{1} \texttt{BW}_{c}(w_{i},\mathbf{s}_{t})$ and $\beta_{2} \texttt{BW}_{miss}(w_{i},\mathbf{s}_{t})$, it is sufficient to show each term is locally Lipschitz continuous for any pair of waveforms $(w_{i}, w_{j}) \in \mathcal{W} \times \mathcal{W}$, since the sum of locally Lipschitz functions is locally Lipschitz. The first term considers the overlap with a fixed interference vector $\mathbf{s}_{t}$. Let the Lipschitz metric be the distortion function
\begin{equation}
D(w_{t},w_{t-1}) \triangleq \gamma_{1} \lVert f_{t} - f_{t-1} \rVert^{2} + \; \gamma_{2} \lVert \texttt{BW}_{t} - \texttt{BW}_{t-1} \rVert^{2} 
\end{equation}
where $\gamma_{1}, \gamma_{2} \geq 0$ are fixed constants. Thus, waveforms with $D(w_{i},w_{j}) \rightarrow 0$ will yield similar values of $\texttt{BW}_{c}$ due to increasing similarity in spectral overlap with the fixed $\mathbf{s}_{t}$. The second term is also dependent on the frequency content of each waveform, so a similar argument holds for $\texttt{BW}_{miss}(w_{i},\mathbf{s}_{t})$. \qed

\subsection{Proof of Proposition 2}
An analysis of the Bayesian (expected) regret of linear Thompson Sampling for contextual bandits was first studied by Russo and Van Roy in \cite{infoTS}. From proposition 3 of \cite{infoTS}, we note the Bayes regret $\mathcal{O}(d \log{n} \sqrt{n})$, where $d$ is the dimensionality of the context vector, holds for the waveform selection problem due to $|\mathcal{W}| < \infty$, the subgaussian disturbance assumption, and noting the constrained action set $\mathcal{W}'$ is a special case of the stochastic decision set $\mathcal{A}_{t}$ proposed in \cite{infoTS}. Setting $d = 3$, we have (\ref{eq:bayesTS}).

The worst-case (frequentist) regret of TS was studied by \cite{agrawal}. By Theorem 1 of \cite{agrawal}, and noting that the constant exploration parameter $\epsilon$ can be set to $\frac{1}{\log{n}}$ with a constant problem dimensionality $d = 3$, we have (\ref{eq:freqTS}). We further note that the constrained action set $\mathcal{W}'$ is consistent with the problem considered in \cite{agrawal}, as we can formulate a problem where the context vector is $\mathbf{0}_{d}$ for any waveform in $\mathcal{W}$ that is not in $\mathcal{W'}$. \qed

\subsection{Proof of Proposition 3}
We begin by assuming a learning rate $\varepsilon$ such that $\varepsilon \; \hat{C}_{t}(w_{i},\hat{\mathbf{s}}_{t}) \geq 0$, $\forall w_{i} \in \mathcal{W}, \mathbf{s}_{t} \in \mathscr{S}$. Applying Theorem 11.1 of \cite{Lattimore2020}, we have the bound
\begin{multline}
	\texttt{Regret}(n) \leq \frac{\log{W}}{\varepsilon} + 2 \gamma n + \varepsilon \sum_{t=1}^{n} \\ \E \left[\sum_{w_{i} \in \mathcal{W'}, \mathbf{s}_{t} \in \mathscr{S}} \tilde{P}_{t}(w_{i}) \hat{C}_{t}(w_{i}, \mathbf{s}_{t})^{2} \right],
\end{multline}
where $\gamma \in (0,1)$ is the exploration parameter of Algorithm \ref{algo:exp3algo}. Setting $\gamma \leq nd$ and using the fact that $\E[\sum_{w_{i} \in \mathcal{W'}, \mathbf{s}_{t} \in \mathscr{S}} \tilde{P}_{t}(w_{i}) \hat{C}_{t}(w_{i}, \mathbf{s}_{t})^{2}|\tilde{P}_{t}] \leq d$ (seen in the proof of Theorem 27.1 of \cite{Lattimore2020}), we are left with
\begin{equation}
	\texttt{Regret}(n) \leq \frac{\log(W)}{\varepsilon} + \varepsilon n (3d).
\end{equation}

Choosing $\varepsilon = \frac{log(W)}{3dn}$ and setting the context dimension $d = 3$ we arrive at (\ref{eq:exp3bound}) \qed

\bibliography{tsbib}{}
\bibliographystyle{IEEEtran}

\end{document}